
\documentclass[journal,12pt,onecolumn,draftclsnofoot,]{IEEEtran}
\usepackage[T1]{fontenc} 
\usepackage[english]{babel} 
\usepackage{amsmath,amsfonts,amsthm,amssymb} 
\usepackage{algorithm}
\usepackage{enumitem}
\usepackage[noend]{algpseudocode}
\usepackage{cite,graphicx,float}
\usepackage{subcaption}
\usepackage{color}
\usepackage{caption}
\usepackage{soul}
\usepackage{ulem}
\captionsetup{font=scriptsize}
\captionsetup[sub]{font=scriptsize}
\usepackage[figurewithin=none]{caption}

\usepackage{fancyhdr} 
\newtheorem{theorem}{Theorem}
\newtheorem{lemma}{Lemma}

\pagestyle{fancyplain} 
\fancyhead{} 
\fancyfoot[L]{} 
\fancyfoot[C]{} 
\fancyfoot[R]{\thepage} 
\setlength{\headheight}{13.6pt} 




\newcommand{\squeezeup}{\vspace{-3mm}}

\title{	\Large On-Request Wireless Charging and Partial Computation Offloading In Multi-Access Edge Computing Systems\\ 
}

\author{Rafia Malik and Mai Vu\\
\small Department of Electrical and Computer Engineering, Tufts University, MA, USA\\
Email: rafia.malik@tufts.edu, mai.vu@tufts.edu} 

\date{\normalsize{May 21, 2018}} 

\begin{document}
\maketitle 

\begin{abstract}
Wireless charging coupled with computation offloading in edge networks offers a promising solution for realizing power-hungry and computation intensive applications on user devices. We consider a multi-access edge computing (MEC) system with collocated MEC server and base-station/access point, each equipped with a massive MIMO antenna array, supporting multiple users requesting data computation and wireless charging. The goal is to minimize the energy consumption for computation offloading and maximize the received energy at the user from wireless charging. The proposed solution is a novel two-stage algorithm employing nested primal-dual and linear programming techniques to perform data partitioning and time allocation for computation offloading and design the optimal energy beamforming for wireless charging, all within MEC-AP transmit power and latency constraints. Algorithm results show that optimal energy beamforming significantly outperforms other schemes such as isotropic or directed charging without beam power allocation. Compared to binary offloading, data partition in partial offloading leads to lower energy consumption and more charging time, and hence offers better wireless charging performance. The charged energy over an extended period of time both with and without computation offloading can be substantial. Opportunistic wireless charging from MEC-AP thus offers a viable untethered approach for supplying energy to user-devices.
\end{abstract}
\begin{IEEEkeywords}
Edge computing, MEC, wireless power transfer, energy efficient network, partial data offloading, optimization
\end{IEEEkeywords}

\section{Introduction}
Multi-Access Edge Computing (MEC) networks have recently garnered significant interest thanks to its ability to provide cloud-computing capabilities within the radio access network, offering proximity, low latency, and high rate access. MEC can bring computing intensive features such as augmented and virtual reality to a large number of connected wireless devices with limited processing capability and battery lifetime by providing services such as computation offloading and wireless charging. Future generation networks offer native support for edge computing functionality, such as key enablers defined by the 3GPP in 5G system architecture to support edge computing \cite{23.501}. A typical deployment scenario is where the MEC server is co-located with the base-station/access-point (BS/AP)~\cite{MEC2018}. At the same time, the exponentially growing number of connected devices leads to network densification with a large number of deployed APs. With multiple MEC-APs deployed over a relatively small area in close vicinity to the connected users, RF wireless power transfer from the APs to the user devices becomes practical. 

Far-field wireless power transfer using Radio Frequency (RF) enables energy-constrained devices to replenish their charge levels without physical connections, offering the inherent advantage of untethered mobility and battery sustainability~\cite{Malik2018}. There has been significant recent progress in wireless power transfer technology ranging from battery-free cellphone operating on harvested energy from RF signals transmitted by a BS 31 feet away~\cite{Smith2017} to reconfigurable RF rectifiers capable of handling the variable nature of input power at the energy harvesting circuits~\cite{Sanchez2017}. Commercial products employing RF power transfer have also appeared on the market, charging multiple devices up to 15 meters away ~\cite{Powercaster}\cite{Ossia}\cite{Energous}. Wireless power transfer in future systems is expected to charge devices at distances ranging from a few meters (for example smart phones) to hundreds of meters (for example sensors)~\cite{Zeng2017}. Adding wireless charging to MEC networks as an \textit{on-request} feature can further help in achieving the required availability and reliability of energy supply, which has become crucial for today's QoS-sensitive applications~\cite{Ekram2016}. 

Prior works have considered the symbiotic convergence of edge computing and wireless power transfer in different deployment scenarios, for example, wireless charging in cooperation assisted edge computing~\cite{Hu2018}, UAV-enabled mobile edge computing~\cite{Chu2018} and MEC based heterogeneous networks~\cite{Ji2018}. Wireless power transfer has been considered in MEC networks for \textit{self-sustained} devices, which rely on wireless charging as their sole power source, in relay-aided edge systems~\cite{Hu2018}, single user~\cite{Chae2016} and multiple user systems~\cite{Wang2018}. Such scenarios are typical for devices with low power requirements and/or low receiver sensitivity. Significantly different from this, an \textit{on-request} wireless charging model is where each user-terminal has its own power source and can use wireless charging from the AP to supplement its power consumption. Such \textit{on-request} charging schemes can minimize the associated energy costs of power transfer and are likely to become an integral part of the maturing 5G vision in the near future~\cite{Ekram2016}.

For multiuser edge networks, the transmission strategy and multiple access scheme can significantly impact the overall latency. In terms of communication and data transfer, existing works typically employ sequential protocols like Time Division Multiple Access (TDMA)~\cite{Bi2018}\cite{Wang2018}\cite{Chae2016}\cite{You2017}. Instead, massive MIMO enables simultaneous data offloading from multiple users to the MEC-AP and hence dramatically reduces the wireless transmission time. Employing massive MIMO at the MEC-AP also delivers high throughput and energy efficiency with transmit power savings because of beamforming gains. Massive MIMO can reduce the transmit power at the AP for a given data rate and therefore also has a positive impact on the system energy consumption. In terms of wireless charging, having a large number of antennas at the MEC-AP leads to increased charging range since a larger amount of energy can be reliably directed and transferred~\cite{Kashyap2016}\cite{Amarasuriya2016}. Hence massive MIMO technology can prove to be highly effective for energy beamforming for efficient charging of the user devices. Prior works only consider wireless charging from MEC servers where the AP is equipped with single antenna~\cite{Bi2018}\cite{Hu2018}, or having multiple antennas but not with massive MIMO capability~\cite{You2017}\cite{Wang2018}. Massive MIMO can be deemed an enabling technology for wireless charging because of its ability to focus energy via sharp beams and charge multiple users concurrently.

In this work, we consider a multi-cell multi-user network scenario where access points equipped with massive MIMO antenna arrays and with co-located mobile edge computing servers offer computation offloading and wireless charging. This model generalizes several existing problems considered in literature on edge computing systems by integrating massive MIMO and power transfer features, which to our knowledge is the first to do so. In our proposed system model, we integrate two different services, computation offloading and wireless charging, which are independent of each other in terms of operation but are bounded by the same latency and power constraints. In our proposed formulation, the objective of the data offloading problem is to minimize the amount of consumed energy, while the objective of energy harvesting is to maximize the received energy. We therefore treat the two problems of offloading and charging sequentially as primary and secondary, where the primary problem aims at energy efficiency for joint communication and computation and the secondary problem aims at achieving a best-effort solution (within the available time slot without violating the latency constraint) for maximum wireless transferred energy. This is different from our follow-up work in \cite{Rafia2020}, where we consider a joint optimization of both computation offloading and wireless charging with the common goal of minimizing the amount of consumed energy, however at the price of a reduced overall received (charged) energy.

Different from self-sustained model which is usually restricted to low-power passive sensors and wearable devices \cite{Ismail2018}, here we propose a system model applicable to an active-user use-case, for instance inside a sports stadium, where multiple smart phone users may request computation data offloading and/or wireless charging. While computation offloading may be needed for AR/VR applications in such a use-case, wireless charging is a complementary billable service provided to further enhance the user experience. The computation offloading service is often time critical (for example, due to an upper bound on the motion-to-photon latency for AR/VR applications \cite{MEC2018}), and therefore offloading requests by the users must be met within the current time block, leading to the latency constraint. On the other hand, wireless charging is opportunistic, and while the MEC-AP may start fulfilling energy requests arriving in any time block, these requests can be accrued and carried over to the next time blocks for fulfillment since charging is not as time-sensitive as computation offloading.

Such a system model has versatile applicability to different use-cases. Examples include (i) AR/VR applications in human-machine interfaces used in smart factories, where complex processing tasks may be offloaded to the edge network, which not only enables easy access to different context information available in the network but also prevents head-mounted AR/VR gear from becoming too warm and uncomfortable to wear \cite{22.104}, (ii) gaming or training service data between two 5G connected devices \cite{22.261}, (iii)  real-time map rendering for autonomous vehicular applications \cite{Malik2020}, and (iv) professional low-latency periodic audio transport services for Audio-Video (AV) production applications, music festivals etc. \cite{22.263}.

\subsection*{Major Contributions}
The main contributions of this work can be summarized as follows.

\begin{enumerate}
\item We propose a system model that integrates two independent MEC services of computation offloading and wireless charging in the same MEC system under the same set of constraints on latency and transmit power. We treat these two problems separately with independent objectives, one of minimizing the energy consumption for computation offloading, and the other of maximizing the received (charged) energy for wireless charging. The two problems, however, are coupled together via system latency constraint where wireless charging is performed opportunistically during each computation offloading period but can span over multiple periods to satisfy as much of the requested charging amount as feasible.

\item We formulate novel problems to minimize the energy consumption in computation offloading, and to maximize the received energy at the users end in wireless charging. We design novel and efficient sequential algorithms to solve these problems. The first algorithm optimizes the data partitioning, transmit power for wireless transmission and time allocation through a nested-structure using a latency-aware descent algorithm \cite{Malik2020} and a primal-dual algorithm. The derived optimal time allocation is then fed to a second algorithm which finds the optimal energy beamforming matrix (including beam power allocation and beam directions) through another nested structure using a primal dual algorithm and linear programming. The proposed sequential algorithms inherently prioritize computation offloading during the resource allocation process, such that the \textit{on-request} wireless charging functionality is only enabled in the current time block if latency and power constraints permit. This is a key practicality feature where precedence is given to time-critical computation offloading at the MEC and wireless charging requests may be fulfilled as necessary in multiple time blocks.

\item Using our proposed algorithm, we provide detailed quantitative performance analysis and study the impact of different system parameters and optimizing variables on the energy consumption and wireless charging performance. We show that data partitioning is a key variable affecting system energy consumption, while latency is paramount for wireless charging performance. Our proposed solution allows continuous wireless charging over an extended period of time (including multiple time periods/blocks) both with and without computation offloading, as long as the device is connected to the MEC-AP. These theoretical and numerical analyses work can serve as sound guidance for practical implementation. 
\end{enumerate}

\subsubsection*{Notation} $\boldsymbol{X}$ and $\boldsymbol{x}$ denote a matrix and vector respectively, $\nabla^2 f(x)$ denotes the Hessian matrix, and $\nabla^2 f(x)^{-1}$ denotes its inverse. For an arbitrary size matrix, $\boldsymbol{Y}$,  $\boldsymbol{Y}^\ast$ denotes the Hermitian transpose, and $\textbf{diag}(y_1,...,y_N)$ denotes an $N\times N$ diagonal matrix with diagonal elements $y_1,...,y_N$. $\boldsymbol{I}$ denotes an identity matrix, and $\boldsymbol{0, 1}$ denote an all zeros and all ones vector respectively. The standard circularly symmetric complex Gaussian distribution is denoted by $\mathcal{CN}(\boldsymbol{0}, \boldsymbol{I})$, with mean $\boldsymbol{0}$ and covariance matrix $\boldsymbol{I}$. $\mathbb{C}^{k \times l}$ and $\mathbb{R}^{k \times l}$ denote the space of $k \times l$ matrices with complex and real entries, respectively.

\section{System Model}\label{sys_model}
We consider a system where $L \geq 1$ Access Points (APs), each co-located with an MEC Server, are deployed over a targeted zone/area, for instance in a sports stadium or a town fair, serving ground users with computation offloading and power transfer. Each AP is equipped with a massive antenna array with $N$ antennas while the user-devices are equipped with single antennas. These APs wirelessly charge (upon request) ground users in downlink, collect offloaded data from the users in uplink, and deliver computed results to users in downlink~\cite{MEC2014}. We consider $K$ users requesting wireless charging service and sending data for computation offloading to each MEC-AP. In the case of cellular networks, wireless charging can be a billable service assuming that the ground users have knowledge of their battery state, and can inform the AP about their battery level for requesting recharge when their battery is critically low.

\setlength{\belowcaptionskip}{-20pt}
\begin{figure}[t]
\centering
\includegraphics[scale = 0.65]{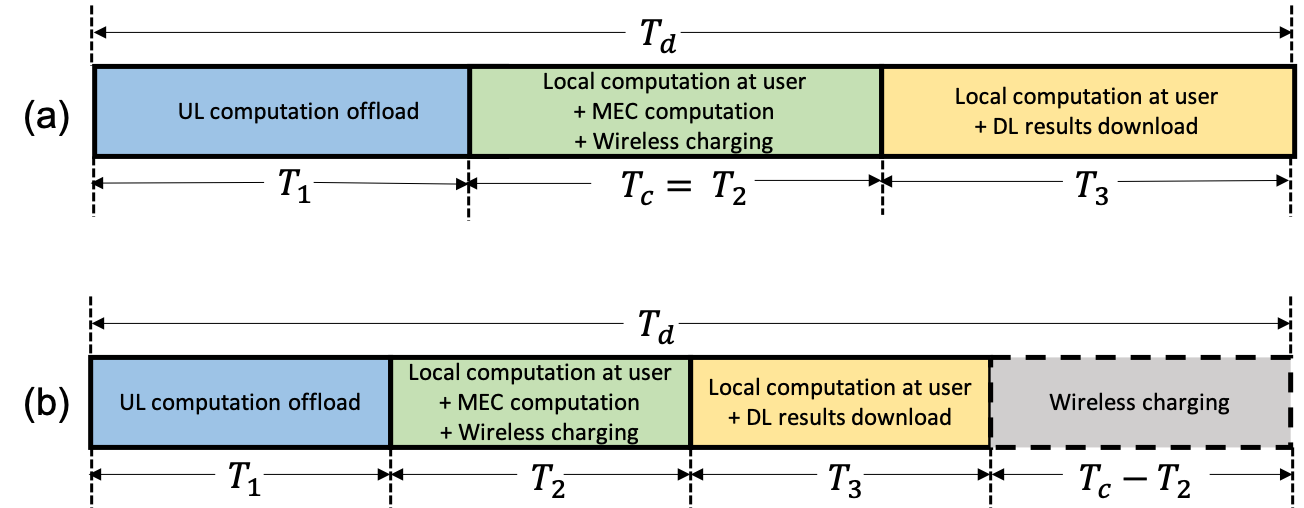}
\caption{Timing diagram and functional model of the system's operation for both data computation offloading and wireless charging. Wireless charging is performed opportunistically during MEC computation period when latency is tight (a), and also after computation offloading finishes when latency is not tight (b).}
\label{fig:phases}
\end{figure}


Consider the case where wireless charging is requested jointly with computation offloading, which includes the scenario of charging only or computation only as special cases. There are three functions contributing to the system's operation as shown in Figure~\ref{fig:phases}; (i) wireless charging of the user terminals by the MEC-AP, (ii) data transmission in the form of computation offloading from the users to the MEC-AP in the uplink and results downloading from the MEC-AP to the users in the downlink, and (iii) data computation at the MEC server and locally at the users. 

Given a latency constraint of $T_d$, the time span for data offloading, computation at both the users and the MEC ends, wireless charging, and delivery of computed results to the user must not exceed $T_d$. Considering computation offloading, this operation is divided into three timing phases: The time duration for data offloading to the MEC is denoted by $T_1$, the computation for offloaded data at the MEC spans duration $T_2$, and the transmission of processed results occupies time $T_3$. The timing during for wireless charging will be dependent on these three computation offloading phases and the total latency. Figure~\ref{fig:phases} shows two scenarios timing model: either computation offloading requires the whole duration of $T_d$, in which case the wireless charging is restricted to the computation phase, or computation offloading consumes a time duration less than $T_d$ and therefore wireless charging can continue after computed results have been transmitted in downlink. Our formulations in the next section account for both of these scenarios. We discuss the energy and time consumption of each system's function, namely wireless charging, data transmission and data computation.

\subsection{Wireless Charging}
\setlength{\belowcaptionskip}{-20pt}
\begin{figure}[t]
                \centering
                \includegraphics[scale = 0.575]{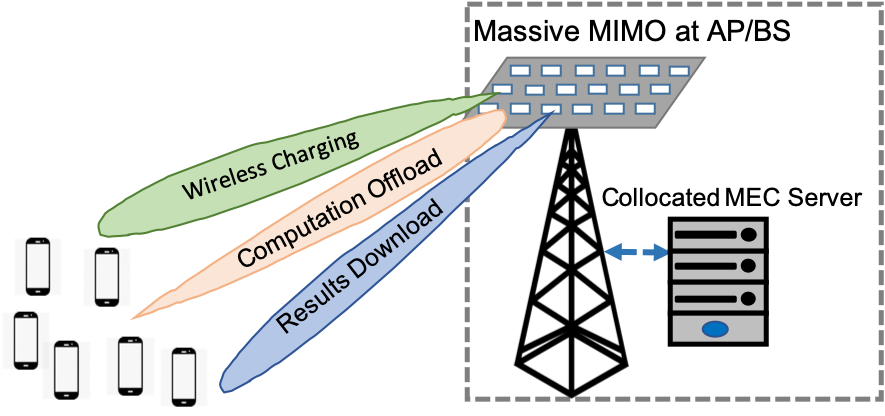}
                \caption{Beamforming for maximal information and energy transfer using massive MIMO antenna array at AP}
                \label{fig:beam}
\end{figure}
In each cell, we consider $K$ users requesting wireless charging from the MEC-AP, where the $i^{\text{th}}$ user requests $e_i$ mJ of energy. To cater for the energy requests from multiple users, the massive-MIMO enabled MEC-AP employs transmit energy beamforming, as shown in Figure~\ref{fig:beam}. Such energy beamforming requires channel state information (CSI), which can be obtained at the AP using \textit{uplink training}, where pilot symbols are transmitted over some duration of the coherence interval to estimate the channel matrix from the users to their serving MEC-AP. For the downlink channel, we assume Time Division Duplex (TDD) operation such that the channel matrix from the AP to the users can be obtained by wireless channel reciprocity of the uplink channel and hence the transmission of downlink pilots becomes unnecessary\cite{Marzetta2016}\cite{Ngo2017}.

Let $\boldsymbol{x_q}$ denote the energy bearing signal from the AP to the user-terminal (UT), $\boldsymbol{W_q} \triangleq \mathbb {E}\Big[\|\boldsymbol{x_q}\|^2\Big]$ denote the transmit covariance matrix, and $P_c = \text{tr}(\boldsymbol{W_q})$ be the power transmitted from the AP for wireless charging, in short, the charging power. Then the received (charged) power at the $i^{\text{th}}$ user is given as
\begin{equation}
P_{h,i} = \xi_i \mathbb{E} \left [ \left | \boldsymbol{h_i^\ast x_q}\right |^2\right ] = \xi_i \text{tr} (\boldsymbol{h_i^\ast W_q h_i})
\end{equation}
where $0 \leq \xi_i \leq 1$ is the energy conversion efficiency from Radio Frequency (RF) to Direct Current (DC) for the $i^{\text{th}}$ user and $\boldsymbol{h_i} \in \mathbb{C}^{N \times 1}$ is the channel from the AP to the $i^{\text{th}}$ user. We assume a linear energy harvesting model where the energy conversion efficiency per user is constant over a single time block duration, $T_d$. Non-linear wireless charging models, with a variable energy conversion efficiency over time, are more applicable to scenarios where there is a variation in the received power\cite{Schober2017} such as at high SNR and also depends on the rectifier characteristics (diode breakdown region) \cite{Bruno2017}. For the considered system with strict latency constraint on each time block for on-request charging, constant user energy conversion efficiency is more suitable. To account for the difference in received power at each user location, each $i^{\text{th}}$ user has its own energy conversion efficiency $\xi_i$ based on the received power in the current time block.

We define $T_c$ as the time duration for wireless charging, where $T_c = T - (T_1 + T_3)$ and includes the time consumed by the computation phase, over which power is transferred to the users alongside computation at the users and at the MEC server. The energy consumed at the MEC server for power transfer, in short the charging energy, is given by
\begin{equation}\label{Ec}
E_c = T_c \text{tr}(\boldsymbol{W_q})
\end{equation}

We consider an \textit{opportunistic wireless charging maximization} approach where the received energy at the users is maximized subject to the latency and MEC-AP's transmit power constraint. For the $i^{\text{th}}$ user requesting $e_i$ amount of energy, the received (charged) energy, $E_{h,i}$, is constrained as below
\begin{equation}\label{cond_energy}
E_{h,i} = P_{h,i} T_c = \xi_i T_c \text{tr} (\boldsymbol{h_i^\ast W_q h_i}) \leq e_i \ \forall \ i \in [1,K]
\end{equation}
Here the amount of wireless charging is upper-bounded by $e_i$ such that the charged energy is at most equal to the requested amount so as not to overcharge the users since charging is a billable service, and also not to burn the user's battery. Having this bound ensures feasibility of energy transfer. In this way no single user gets an unfairly large amount of the charged energy at the expense of others, and only a portion of the requested energy may be charged (in the current time block) if it is unfeasible for the AP to satisfy the user's energy request completely due to poor channel conditions or high energy request(s) by a single or few users. 

Note that in cases where the $i^{\text{th}}$ user's energy request is only partially fulfilled in the current time block, the remaining amount may be charged in subsequent time blocks. Since, in our considered system model, the operation of computation offloading is not dependent on wireless charging for energy, charging can be deferred to future time blocks if computation offloading demands more time and energy resources in the current time block. For the remaining of the paper, to simplify notation, we assume that for the current time block, no amount of energy has previously been received by the user, and the charge requested by the $i^{\text{th}}$ user is equal to $e_i$. All subsequent formulations and algorithms, however, are applicable if this requested energy is scaled by a factor to reflect a proportion in each time block.

\subsection{Data Transmissions}
For computation offloading at each MEC, we consider the simple \textit{data-partition model}, where the task-input bits are bit-wise independent and can therefore be arbitrarily divided into different groups to be executed by different entities~\cite{Mao2017}. We consider the case of partial offloading, such that for the $i^{th}$ user, the $u_i$ computation bits are partitioned into $q_i$ and $s_i$ bits, where $q_i$ bits are computed locally and $s_i$ bits are offloaded to the MEC server. Assuming that such partition at the user-terminal does not incur additional computation bits, then $u_i = q_i + s_i$.
\subsubsection{Offloading Data in Uplink}
In a given time slot, $K$ single-antenna user terminals simultaneously offload to the $N$ antenna AP. We consider $N \gg K$ such that the throughput becomes independent of the small-scale fading with channel hardening~\cite{Ngo2017}. The very large signal vector dimension at a massive MIMO AP enables the use of linear detectors such as maximum ratio combining (MRC), in which case the uplink net achievable transmission rate for the $i^{th}$ user in the $l^{th}$ cell, $r_{u,i}$, is given as~\cite{Marzetta2016}
\begin{equation}\label{rate_ul}
r_{u,i} = \nu \log_2 \left ( 1 + \frac{\text{SINR}_{li}^{ul}}{\Gamma_{1}} \right ), \ \text{SINR}_{li}^{ul} = \frac{N \gamma_{li}^l p_{li}}{\sigma_{1,li}^2}
\end{equation}
where $\Gamma_{1} \geq 1$ accounts for the capacity gap due to practical coding schemes, $\gamma_{li}$ is the mean-square channel estimate, and $p_{li}$ is the transmit power of the $i^{\text{th}}$ user in the $l^{\text{th}}$ cell.  The constant $\nu$ represents the portion of transmission symbols spent on data transfer in the coherence interval $\tau_c$. The interference and noise power, $\sigma_{1,li}^2$, includes the receiver noise variance, interference due to channel estimation and from contaminating cells, and inter-cell interference as defined in~\cite[Eq. 4.18]{Marzetta2016}, and is dependent on all users' transmit power and channel conditions \cite{Malik2020}.

The energy consumed for offloading the $i^{th}$ user's data is given by $E_{OFF,i} = p_i t_{u,i}$, where $p_i$ is the transmit power and $t_{u,i}$ is the transmission time for the $i^{th}$ user. Let $B$ denote the channel bandwidth, then $t_{u,i} = \frac{s_i}{B r_{u,i}}$. All users offload their computation bits simultaneously, and the total energy and time overhead for simultaneous data offloading is given as
\begin{equation}\label{E_ul}
E_{OFF} = \sum_{i=1}^{K} \frac{p_i s_i} {B r_{u,i}}, \ T_1 = \max_{i \in [1,K]} t_{u,i}.
\end{equation}

\subsubsection{Downloading Results in Downlink}
For the $i^{th}$ user in the $l^{th}$ cell, the downlink transmission rate with maximum ratio linear precoding at the MEC-AP is given as~\cite{Marzetta2016}
\begin{equation}\label{rate_dl}
r_{d,i} = \log_2 \left ( 1 + \frac{\text{SINR}_{li}^{dl}}{\Gamma_{2}} \right ), \ \text{SINR}_{li}^{dl} = \frac{N P \gamma_{li}^l \eta_{lk}}{\sigma_{2,li}^2}
\end{equation}
where $\Gamma_{2} \geq 1$ is the capacity gap, and $\sigma_{2,li}^2$ is the interference and noise power which also contains pilot contamination and intercell interference as given in~\cite[Eq. 4.34]{Marzetta2016}, and depends on the power allocation at the MEC-AP for downlink wireless transmission and also on the channels between the AP and the users \cite{Malik2020}.

The transmission time for delivering the $i^{th}$ user's computation results can be written in terms of the downlink rate in (\ref{rate_dl}) as $t_{d,i} = \frac{\tilde{s}_i}{B r_{d,i}}$. Here $\tilde{s}_i$ denotes the number of information bits generated after processing $s_i$ offloaded bits of the $i^{th}$ user. The number of information bits generated as a result of data computation ($\tilde{s}_i$) are proportional to the data bits to be computed ($s_i$), that is $\tilde{s}_i \propto s_i \to \tilde{s}_i = \mu s_i$. $\mu$ is the proportionality parameter between the amounts of requested and computed data and is not restricted to the range [0,1], rather it adds an application-centric flexibility to our system model in terms of the data size in downlink. For instance, $\mu < 1$ for face recognition applications or $\mu \gg 1$ for video-rendering applications \cite{Chen2015}\cite{Mangiante2017}\cite{Malik2020}. The AP simultaneously transmits computed results for all users, and the total energy and time overhead for results downloading are then given as
\begin{equation}\label{E_dl}
E_{DL} = \sum_{i=1}^{K} \frac{P \eta_i \mu s_i}{B r_{d,i}}, \ T_3 = \max_{i \in [1,K]} t_{d,i}.
\end{equation}

\subsection{Data Computation}
\subsubsection{Local computation at the users}
The time for computation depends on the amount of data to be computed and the CPU cycle frequency. The energy consumption and the processing time for local computation at the $i^{th}$ user is given as~\cite{Mao2017} 
\begin{align}\label{t_Li}
&E_{LC} = \sum_{i=1}^{K} \kappa_i c_i (u_i - s_i) f_{u,i}^2, \ \ t_{L,i} = \frac{c_i (u_i - s_i)}{f_{u,i}}
\end{align}
where $\kappa_i$ is the effective switched capacitance, $f_{u,i}$ denotes the average CPU frequency, $c_i$ denotes the CPU cycle information, and $q_i = u_i - s_i$ is the total number of bits required to be locally computed at $i^{th}$ user respectively. While it is possible for the user to offload data and perform local computation at the same time, the user's power is limited, thus as a common assumption \cite{Chae2016}\cite{Wang2018}, the user device focuses its power for offloading data and performs no computation during Phase I. The users' local computation time starts in Phase II and can also extend to Phase III while the MEC is sending computed results back to users. This fact is considered later in the problem formulations.

\subsubsection{Computation of the offloaded data at the MEC server}
MEC servers, with high computation capacities, compute the tasks of all users in parallel~\cite{Taleb2017}\cite{Mao2017}. The energy and time consumed for computing offloaded bits is given as
\begin{equation}\label{tMEC}
E_{OC} = \sum_{i = 1}^{K} \kappa_m f_{mi}^2 d_m s_i, \ \ t_{M,i} = \frac{d_m s_i}{f_{mi}} \ \forall i \in [1, K], \ \ T_2 = \max\{t_{M,i}\}.
\end{equation}
where $t_{M,i}$ is the time for computing $i^{th}$ user's offloaded task, $s_i$ is the number of bits offloaded by the $i^{th}$ user to the MEC, $d_m$ is the number of CPU cycles required to compute one bit at the MEC, $f_{mi}$ is the CPU frequency assigned to the $i^{th}$ user's task, and $\kappa_m$ is the effective switched capacitance of the MEC server. The computation at the MEC is synchronous such that computation only begins after data from all users has been offloaded. While it is possible to perform fine-scale timing optimization where the MEC starts computing immediately after it receives a user's data, the expected gain from this would be negligible since the computation time, $T_2$, is short compared to $T_1$ and $T_3$ \cite[Figure~6]{Malik2020} and further optimizing each user's computation time at the MEC can significantly increase the formulation and algorithm complexity.

For our formulation to follow in Section \ref{formulation}, we consider equal frequency allocation for users' tasks, that is $f_{m,i} = f_m \ \forall i$, based on previous results in \cite{Malik2020} showing that, in typical network settings, wireless transmission energy consumption is significantly dominant compared to the computation energy consumption and therefore dynamic frequency allocation has little effect on the overall system's energy consumption.

\section{Optimization Problem Formulations}\label{formulation}
Considering a multi-cell multi-MEC network, we formulate an edge computing problem which explicitly accounts for physical layer parameters including available transmit powers from each user and the MEC, associated massive MIMO data rates with realistic pilot contamination and interference. For simplicity of notation, we assume that all $K$ users which are offloading their computation to the MEC server are also requesting wireless charging. 

In this section, we discuss a sequential formulation and consider the problems of computation offloading $(P_{\text{CO}})$ and wireless charging $(P_{\text{WC}})$ independently in terms of energy optimization. The aim of $(P_{\text{CO}})$ is to minimize the energy consumption for computation offloading, while the goal of $(P_{\text{WC}})$ is to maximize the energy received at the users through wireless charging. We consider wireless charging as an opportunistic service in the sense that charging happens during the time available after timing has been optimally allocated for computation offloading. This leads to a sequential optimization process where the optimization for wireless charging will follow that of computation offloading. It should be emphasized that the sequential process is only in terms of optimization, as once all the variables and parameters are optimized, the operations of computation offloading and wireless charging can occur simultaneously as discussed in the system model of Section \ref{sys_model}.

\subsection{Minimization Of Energy Consumption For Computation Offloading}
Using the uplink and downlink transmission rates, respectively defined as $r_{u,i} = \frac{s_i}{\nu t_{u,i} B}$ and $r_{d,i} = \frac{\mu s_i}{t_{d,i} B}$, and based on (\ref{rate_ul}) and (\ref{rate_dl}), we can express the per-user power allocation variables for uplink ($p_{li}$) and downlink ($\eta_{li}$) transmissions as functions of the time allocation and data partitioning as follows:
\begin{align}\label{poweralloc}
p_{li} = \frac{(2^{\frac{s_i}{\nu t_{u,i} B}} - 1)\Gamma_{1}\sigma_{1,i}^2}{N \gamma_{i}}, \  \ \eta_{li} = \frac{(2^{ \frac{\mu s_i}{t_{d,i} B}} - 1)\Gamma_{2}\sigma_{2,i}^2}{P N \gamma_{i}}
\end{align}
Replacing these expressions into (\ref{E_ul}) and (\ref{t_Li}), the total energy consumption by all users can be written as
\begin{equation}\label{E_u}
E_u =  \sum_{i=1}^{K} \left[\frac{t_{u,i}(2^{\frac{s_i}{\nu t_{u,i}B}} - 1)\Gamma_{1}\sigma_{1,i}^2}{N \gamma_i} + \kappa_i c_i (u_i - s_i) f_{u,i}^2 \right]
\end{equation}
Similarly, based on equations (\ref{E_dl}) and (\ref{tMEC}), the total energy consumption at the MEC server for computation offloading is
\begin{equation}\label{E_m}
E_m = \sum_{i=1}^{K} \left[\frac{t_{d,i}(2^{\frac{\mu s_i}{t_{d,i}B}} - 1)\Gamma_{2}\sigma_{2,i}^2}{N \gamma_{i}}  + \kappa_m d_m f_{mi}^2 s_i \right]
\end{equation}
The energy minimization problem for computation offloading can then be given as
\setcounter{equation}{13}
\begin{align*}\label{PseqCO}
(P_\text{CO}): \ \min_{\boldsymbol{s, t}} \ &E_{\text{total}} = (1 - w) E_{u} + w E_{m}  \tag{\theequation}&\\
\text{ s.t. }  \ &\text{Eqs. } (\ref{E_u})-(\ref{E_m}) \tag{a-b}&\\
&\sum_{j=1}^{3} \left ( T_j\right ) \leq T_d, \ \ \ \ \frac{c_i (u_i - s_i)}{f_{u,i}} + t_{u,i} - T_d \leq 0 \ \ \ \  \ \ \ \  \forall i \in [1,K] \tag{c-d}&\\
&t_{u,i} - T_1 \leq 0, \ \ \ \  \ t_{d,i} - T_3 \leq 0, \ \ \ \ \frac{d_m s_i}{f_{mi}} - T_2 \leq 0 \ \ \ \forall i \in [1,K] \tag{e-g}&
\end{align*}
Here $E_{\text{total}}$ is weighted sum of energy consumed at all users ($E_{u}$) and the MEC ($E_{m}$), with $1 - w$ and $w$ as the respective weights. The optimizing variables of this problems are time allocation $\boldsymbol{t} = [t_{u,1}...t_{u,K}, t_{d,1}...t_{d,K}, T_1, T_2, T_3, T_c]$, and offloaded data $\boldsymbol{s} = [s_1...s_K]$. Given parameters of the problems are $T_d$ as the total latency constraint, $P$ as the AP's transmit power, $B$ as the channel bandwidth, $\Gamma_1$, $\Gamma_2$ as the uplink and downlink capacity gaps, $(\kappa_i, c_i)$ and $(\kappa_m, d_m)$ as the switched capacitance and CPU cycle information at the users and the MEC respectively.

Constraints (a-b) show the total energy consumption at the users and the MEC respectively, which includes the energy consumed for offloading/downloading and computation. Constraints (c-d) represent the constraint that both the time consumed for all three phases at the MEC, and the time consumed for offloading $\boldsymbol{t_u}$ and local computation at each user $\boldsymbol{t_L}$ should not exceed $T_d$. Constraints (e-g) show that the time consumed separately for offloading $\boldsymbol{t_u}$, computation of users' tasks at the MEC $\boldsymbol{t_M}$, and downloading time $\boldsymbol{t_d}$ for each user's results must be less than the maximum allowable time, $\{ T_1, T_2, T_3\} $, for that phase as given in \{(\ref{E_ul}),(\ref{tMEC}), (\ref{E_dl})\} respectively. 

\subsection{Maximization Of Received Energy By Wireless Charging}
The above computation offloading problem is followed by the opportunistic wireless charging problem as given below
\setcounter{equation}{15}
\begin{align*}\label{PseqWC}
(P_{\text{WC}}): \ \max_{\boldsymbol{W_q}} \ \ &\sum_{i=1}^{K} \xi_i \text{tr}(h_i^\ast \boldsymbol{W_q} h_i) T_c \tag{\theequation}\\
\text{s.t.} \ \ &\text{tr}(\boldsymbol{W_q}) \leq P \tag{a}\\
&\xi_i \text{tr}(h_i^\ast \boldsymbol{W_q} h_i) T_c \leq e_i \ \ \forall i = 1...K \tag{b}
\end{align*}
Here the charging time is defined as $T_c = T_d - T_1^\star - T_3^\star$, where $T_1^\star$ and $T_3^\star$ are the optimal time allocation for offloading and downloading operations obtained by solving $(P_{\text{CO}})$. In this way, the two problems are formulated in a sequential manner in compliance with the overall latency constraint. The charging time $T_c$ denotes that wireless charging occupies all the time within $T_d$ outside the data transmission operations of offloading and downloading. The optimizing variable is the beamforming matrix for wireless charging $\boldsymbol{W_q} \in \mathbb{R}^{N \times N}$. The objective function is a sum of the received energy for all users and the objective is to maximize this overall received energy at the users. Constraint (a) represents the physical layer constraint on the maximum transmission power of the AP. Constraint (b) shows that the amount of received (charged) energy at the $i^{\text{th}}$ user is no more than the energy that it requests.

Problem $(P_{\text{CO}})$ is a semi-definite programming problem where the objective function and constraints are linear trace functions of $\boldsymbol{W_q}$ and hence convex. We can show that strong duality holds since Slater's condition is satisfied, that is, we can find a strictly feasible point ($\boldsymbol{W_q} = p\boldsymbol{I}_{N \times N}$, $p \leq P/N$) in the relative interior of the domain of the problem where the inequality constraints hold with strict inequalities~\cite{Boyd2004}. 

\section{Data Partitioning And Time Allocation for Computation Offloading}
\subsection{Problem Analysis}
In this section we analyze the computation offloading problem $(P_{\text{CO}})$ and show that it can be decomposed into simpler problems. The multivariable problem in (\ref{PseqCO}) is a non-linear and non-convex optimization problem. Following a similar approach as in \cite{Malik2020}, the objective function $f_0$ for $(P_{\text{CO}})$ is a convex function of $s_i$. Furthermore, provided that the gradient of $f_0(\cdot)$ with respect to $s_i$ evaluated at $s_i = 0$ is positive, which is often satisfied in typical network settings, then the total energy in problem $(P_{\text{CO}})$ is an increasing function of each $s_i$ and there exists an optimal point, $s_i^\star \ \forall i \in [1,K]$, which minimizes $E_{\text{total}}$ within the latency constraint. If offloaded data $\boldsymbol{s}$ is fixed, then problem $(P_{\text{CO}})$ turns out to be convex in the remaining variables as stated in the following lemma. Lemma \ref{lemma1} lets us decompose the original non-convex problem $(P_{\text{CO}})$ into simpler convex subproblems which will be used in the subsequent algorithm design. 
\begin{lemma}\label{lemma1}
For a given set of offloaded data $\boldsymbol{s}$, the problem $(P_{\text{CO}})$ is convex in the time allocation variable $\boldsymbol{t}$.
\end{lemma}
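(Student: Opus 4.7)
The plan is to fix $\boldsymbol{s}$ and verify convexity of the objective and all constraints of $(P_{\text{CO}})$ as functions of the time vector $\boldsymbol{t}$. Since the sum of convex functions is convex, I will inspect the constituent terms of $E_u$ and $E_m$ separately, identify which are constants (and can be dropped) and which still depend on $\boldsymbol{t}$, and then handle the surviving nonlinear terms with a perspective-function argument.

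First I would observe that in $E_u$ (equation (\ref{E_u})), with $\boldsymbol{s}$ fixed, the local computation term $\kappa_i c_i (u_i - s_i) f_{u,i}^2$ becomes a constant and can be discarded; only the offloading energy term $\frac{t_{u,i}(2^{s_i/(\nu t_{u,i} B)}-1)\Gamma_1 \sigma_{1,i}^2}{N\gamma_i}$ depends on $\boldsymbol{t}$. An analogous reduction applies to $E_m$: the MEC computation term $\kappa_m d_m f_{m i}^2 s_i$ is constant, leaving only the downlink energy term $\frac{t_{d,i}(2^{\mu s_i/(t_{d,i}B)}-1)\Gamma_2 \sigma_{2,i}^2}{N\gamma_i}$. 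Up to positive scaling and an additive constant, the nonconstant parts of the objective therefore have the form $g(t) = t\bigl(2^{a/t}-1\bigr)$ for $t>0$, with $a=s_i/(\nu B)>0$ (uplink) or $a=\mu s_i / B > 0$ (downlink).

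Next I would prove convexity of $g(t) = t(2^{a/t}-1)$ on $t>0$. The cleanest route is the perspective-function argument: let $h(y)=2^y-1$, which is convex on $\mathbb{R}$ since $h''(y)=(\ln 2)^2 2^y > 0$. The perspective $\tilde h(y,t)=t\,h(y/t)=t(2^{y/t}-1)$ is convex jointly in $(y,t)$ for $t>0$ (a standard fact, e.g.\ \cite{Boyd2004}), and restricting it to the affine slice $y=a$ preserves convexity, yielding convexity of $g(t)$. As a sanity check, one can also compute $g''(t)=\frac{(a\ln 2)^2}{t^3}\,2^{a/t}>0$ directly. Summing over $i\in[1,K]$ with nonnegative weights $(1-w)$ and $w$, together with the positive prefactors $\Gamma_j\sigma_{j,i}^2/(N\gamma_i)$, yields convexity of $E_{\text{total}}$ in $\boldsymbol{t}$.

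Finally I would verify that the feasible set is convex by inspecting each constraint in (\ref{PseqCO}) with $\boldsymbol{s}$ fixed. Constraints (c)--(g) are all affine in the entries of $\boldsymbol{t}$: the latency constraint $T_1+T_2+T_3\le T_d$, the per-user latency constraint $\frac{c_i(u_i-s_i)}{f_{u,i}}+t_{u,i}\le T_d$, the phase-timing constraints $t_{u,i}\le T_1$, $t_{d,i}\le T_3$, and the MEC computation constraint $\frac{d_m s_i}{f_{mi}}\le T_2$ each become affine once $s_i$ and $f_{mi}$ are treated as constants. Constraints (a)--(b) merely express the energy quantities that appear in the objective. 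Hence the feasible region is a polyhedron and the objective is convex, so $(P_{\text{CO}})$ with $\boldsymbol{s}$ fixed is a convex program in $\boldsymbol{t}$, which is the conclusion of the lemma. The only nontrivial step is the convexity of $g(t)=t(2^{a/t}-1)$, and the perspective-function viewpoint makes that essentially immediate.
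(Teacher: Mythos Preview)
Your proof is correct and follows essentially the same route as the paper: fix $\boldsymbol{s}$, observe that all constraints (c)--(g) become affine, and establish convexity of the sole nonlinear term $t\,2^{a/t}$. The only cosmetic difference is that the paper verifies convexity of $f(x)=x\,2^{1/x}$ via the direct second-derivative computation $f''(x)=2^{1/x}/x^3>0$, whereas you invoke the perspective-function construction (and then do the same derivative check anyway); both arguments are equivalent here.
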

\begin{proof}
Proof follows by examining each constraint and showing that with fixed $s_i$, it is a convex function. Details in Appendix A.
\end{proof}

Since CPU frequencies are not optimizing variables, for given $s_i$ in $(P_{\text{CO}})$, we can find in closed form the optimum time consumed by the MEC to compute each user's tasks, and the overall time $T_2$ spent for the data computation function at the MEC as in Lemma \ref{prop1} next.

\begin{lemma}\label{prop1}
For a given value of the offloaded data $s_i$, the computation time for the offloaded data $T_2$ can be pre-determined in closed form as follows
\begin{equation}\label{T2closed}
T_2 = \max_i \frac{d_m s_i}{f_{mi}}
\end{equation}
and hence constraint (\ref{PseqCO}g) can be excluded from the problem $(P_{\text{CO}})$.
\end{lemma}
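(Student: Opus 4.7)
The plan is to exploit the fact that $T_2$ appears neither in the objective nor in any constraint of $(P_{\text{CO}})$ other than the overall latency budget and the MEC-computation lower bound. First I would inspect $E_{\text{total}} = (1-w)E_u + wE_m$ given by (\ref{E_u}) and (\ref{E_m}) and verify that it depends only on $(s_i, t_{u,i}, t_{d,i})$, with no explicit dependence on $T_2$. The local-computation constraint, the uplink bound $t_{u,i} \leq T_1$, and the downlink bound $t_{d,i} \leq T_3$ likewise do not involve $T_2$. The only places $T_2$ enters are the overall latency budget $T_1 + T_2 + T_3 \leq T_d$, where it consumes part of the available latency, and the per-user bound $\tfrac{d_m s_i}{f_{mi}} \leq T_2$, which imposes a lower bound on $T_2$.

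Next I would argue that any optimum must place $T_2$ at its smallest feasible value. With $\boldsymbol{s}$ fixed, decreasing $T_2$ leaves the objective untouched while relaxing the latency constraint, which in turn enlarges the admissible set for $(T_1, T_3)$ and hence, via $t_{u,i} \leq T_1$ and $t_{d,i} \leq T_3$, for $(t_{u,i}, t_{d,i})$. A short calculation shows that the map $t \mapsto t\bigl(2^{s/(tB)} - 1\bigr)$ appearing in (\ref{E_u}) is strictly decreasing in $t > 0$ for fixed $s, B > 0$, and the analogous downlink term in (\ref{E_m}) has the same monotonicity. Hence enlarging the admissible intervals for $t_{u,i}, t_{d,i}$ can only weakly decrease the minimum of $E_{\text{total}}$. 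Therefore, at optimality, $T_2^\star$ equals the lower bound, namely $T_2^\star = \max_i \tfrac{d_m s_i}{f_{mi}}$, yielding the closed form (\ref{T2closed}). Substituting this into the latency budget collapses it to $T_1 + T_3 \leq T_d - \max_i \tfrac{d_m s_i}{f_{mi}}$, which fully encodes the information carried by the per-user bound, and the latter constraint can then be dropped from $(P_{\text{CO}})$ without changing the optimal value.

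The only delicate step is the monotonicity claim above: one must show that $f(t) = t\bigl(2^{s/(tB)} - 1\bigr)$ is strictly decreasing in $t > 0$. Setting $x = s/(tB)$, the derivative computation reduces this to the inequality $2^x(1 - x\ln 2) < 1$ for $x > 0$, which follows because the function $g(x) = 2^x(1 - x\ln 2) - 1$ satisfies $g(0) = 0$ and $g'(x) = -x(\ln 2)^2 \, 2^x < 0$ for $x > 0$. Once this inequality is in hand, the rest of the argument is a routine feasibility reduction with no further obstacle.
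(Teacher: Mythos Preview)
Your argument is correct. The paper's own proof is a single sentence---``Directly from constraint (g) in (\ref{PseqCO}) for a given $s_i$''---which simply reads off the lower bound $T_2 \ge \max_i d_m s_i / f_{mi}$ from the constraint and asserts equality without further justification. You supply the missing optimality reasoning: $T_2$ is absent from the objective and from every constraint except (c) and (g), so pushing $T_2$ down to its lower bound can only enlarge the feasible region for the remaining variables.

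One remark on economy: the monotonicity analysis of $t \mapsto t(2^{s/(tB)}-1)$, while correct, is more than the lemma requires. A shorter route is the feasibility-preservation observation you already make implicitly: for \emph{any} feasible point $(\boldsymbol{s},\boldsymbol{t})$, replacing $T_2$ by $\max_i d_m s_i / f_{mi} \le T_2$ yields another feasible point (constraint (g) still holds, constraint (c) is only loosened) with the \emph{same} objective value, since $E_{\text{total}}$ does not depend on $T_2$. Hence the optimal value is unchanged by fixing $T_2$ at its minimum, and no strict-decrease argument for the energy terms is needed. Your monotonicity computation would be required only if one wanted to show that the latency constraint (c) is active at optimality, which is a stronger statement than the lemma claims. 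Also, your assertion that ``any optimum \emph{must} place $T_2$ at its smallest feasible value'' is slightly too strong: if (c) is slack, $T_2$ can float above the minimum without affecting optimality; the correct statement is that one \emph{may} take $T_2$ at its minimum without loss.
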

\begin{proof}
Directly from constraint (g) in (\ref{PseqCO}) for a given $s_i$.
\end{proof}

\subsection{Optimal Primal Solution}
Next we present the solution for the optimal time allocation for the computation offloading problem $(P_{\text{CO}})$. Since the problem is convex based on Lemma \ref{lemma1}, we adopt a primal-dual solution using the Lagrangian duality analysis similar to that proposed in \cite{Malik2020} and derive the optimal solution as given in Theorem~\ref{theorem2} below.

\begin{theorem}\label{theorem2}
The offloading and downloading time, $t_{u,i}$ and $t_{d,i}$ respectively, can be obtained as a solution of the form 
\begin{equation}\label{LambertSol}
x = \frac{c B}{\ln 2} \Big(W_0 \Big(\frac{-y}{\sigma^2 e} - \frac{1}{e}\Big) + 1 \Big)
\end{equation}
where $y = -\frac{\beta_i + \theta_i}{(1 - w)}$, $x = x_{1,i} = \frac{1}{t_{u,i}}$, $c = \frac{\nu}{s_i}$, $\sigma^2 = \frac{ \Gamma_{1} \sigma_{1,i}^2}{N \gamma_i}$ to solve for $t_{u,i}$, and  $y = \frac{- \phi_i}{w}$, $x = x_{2,i} = \frac{1}{t_{d,i}}$, $c = 1/\mu s_i$, and  $\sigma^2 = \frac{\Gamma_{2} \sigma_{2,i}^2}{N \gamma_i}$ to solve for $t_d,i$. Here $\theta_i$, $\beta_i$ and $\phi_i$ are the dual variables associated with the constraints (d), (e) and (g) of problem $(P_{\text{CO}})$ in (\ref{PseqCO}) respectively.
\end{theorem}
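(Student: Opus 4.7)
The plan is to derive both closed-form expressions from the KKT stationarity conditions of $(P_{\text{CO}})$. By Lemma \ref{lemma1}, the problem is convex in $\boldsymbol{t}$ when $\boldsymbol{s}$ is fixed, so KKT is both necessary and sufficient for primal optimality; it suffices to stationarize the Lagrangian with respect to $t_{u,i}$ and $t_{d,i}$.

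First, I would form the Lagrangian of $(P_{\text{CO}})$ with multipliers $\theta_i$, $\beta_i$, $\phi_i$ attached to the constraints listed in the theorem statement, and observe that $t_{u,i}$ enters only through the user-energy term in (\ref{E_u}) and through constraints (d) and (e). Differentiating and collecting terms yields
\begin{align*}
\frac{\partial L}{\partial t_{u,i}} = (1-w)\frac{\Gamma_{1}\sigma_{1,i}^2}{N\gamma_i}\left[2^{s_i/(\nu t_{u,i}B)} - 1 - 2^{s_i/(\nu t_{u,i}B)}\frac{s_i \ln 2}{\nu t_{u,i} B}\right] + \theta_i + \beta_i.
\end{align*}
Setting this to zero and adopting the shorthand $x = 1/t_{u,i}$, $c = \nu/s_i$, $\sigma^2 = \Gamma_{1}\sigma_{1,i}^2/(N\gamma_i)$, $y = -(\theta_i+\beta_i)/(1-w)$ reduces the stationarity equation to the compact transcendental form
\begin{align*}
e^{u}(1-u) = 1 + y/\sigma^2, \qquad u := (x/(cB))\ln 2.
\end{align*}

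The second step is to pivot this equation into canonical Lambert form. Substituting $v = u - 1$ gives $v\,e^{v} = -(1/e)(1 + y/\sigma^2) = -1/e - y/(e\sigma^2)$, whence by the defining identity $W(z)e^{W(z)} = z$ we obtain $v = W_0(-y/(e\sigma^2) - 1/e)$. Unwinding the substitutions $u = v+1$ and $x = cB\,u/\ln 2$ then reproduces (\ref{LambertSol}) exactly. The principal branch $W_0$ is the admissible choice because physical feasibility forces $u \geq 0$, i.e.\ $v \geq -1$, which is precisely the range of $W_0$; one should also verify that the argument $-y/(e\sigma^2) - 1/e \geq -1/e$ to guarantee that $W_0$ is real-valued, which follows from $y \leq 0$ (non-negative duals and $w \in [0,1]$).

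The derivation for $t_{d,i}$ is structurally identical: it appears only in the MEC-energy term in (\ref{E_m}) and in the downloading time-bound constraint whose multiplier the theorem denotes $\phi_i$. Differentiating, setting to zero, and substituting $x = 1/t_{d,i}$, $c = 1/(\mu s_i)$, $\sigma^2 = \Gamma_{2}\sigma_{2,i}^2/(N\gamma_i)$, $y = -\phi_i/w$ produces exactly the same functional equation as above, so the same Lambert-$W_0$ formula applies with these new parameter identifications. I expect the main technical obstacle to be the algebraic pivot from $e^u(1-u) = \text{const}$ to the canonical $ve^v = z$ form via $v = u-1$, together with justifying that the principal branch is the physically admissible one; the differentiation, dual-variable bookkeeping, and final re-parametrizations are otherwise routine.
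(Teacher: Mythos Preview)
Your proposal is correct and follows the same approach the paper indicates: applying KKT stationarity to the Lagrangian of $(P_{\text{CO}})$ with respect to $t_{u,i}$ and $t_{d,i}$, then inverting the resulting transcendental equation via the Lambert~$W_0$ function. The paper's own proof only states this strategy and defers the detailed derivation to \cite[Theorem~1]{Malik2020}; you have supplied exactly those omitted details, including the $v=u-1$ pivot to canonical Lambert form and the branch selection argument.
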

\begin{proof}
The solution in (\ref{LambertSol}) can be obtained directly by applying KKT conditions on the Lagrangian dual of the problem $P_{\text{CO}}$ with respect to $t_{u,i}$ and $t_{d,i}$. Detailed proof can be obtained using an approach similar to that in \cite[Theorem 1]{Malik2020} and is omitted for brevity.
\end{proof}

\section{Energy Beamforming for Wireless Charging}
In this section, we derive the solution for the optimal transmit covariance matrix, $\boldsymbol{W_q}$ by finding the optimal energy beam directions and also the optimal beam power allocation. For the received energy maximization problem $(P_{\text{WC}})$, we use Lagrangian duality analysis to obtain the optimal beam directions as described in Theorem \ref{theorem5} below.
\begin{theorem}\label{theorem5}
For maximizing the received energy, the optimal directions for energy beams are $\boldsymbol{U_q^\star} = \boldsymbol{U_C}$, where $\boldsymbol{U_C}$ is obtained from the eigenvalue decomposition of $\boldsymbol{C} = \boldsymbol{U_C \Lambda_C U_C^\ast}$, such that $\lambda_{C,1} \geq \lambda_{C,2} \geq \ldots \geq \lambda_{C,N}$, where
\vspace{-2mm}
\begin{equation}\label{matC}
\boldsymbol{C} =  \chi \boldsymbol{I} +  \xi_i T_c \sum_{i=1}^{K} (1 + \rho_i) \boldsymbol{h_i h_i^\ast}
\end{equation}
Here $\chi$ and $\rho_i$ are the dual variables associated with constraint (\ref{PseqWC}a) and the $i^{\text{th}}$ constraint in (\ref{PseqWC}b) respectively.
\end{theorem}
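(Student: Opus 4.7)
The plan is to apply Lagrangian duality together with a trace--extremization argument. I would first form the Lagrangian of $(P_{\text{WC}})$, introducing a scalar multiplier $\chi \geq 0$ for the AP transmit--power constraint (\ref{PseqWC}a) and multipliers $\rho_i \geq 0$ for each per--user received--energy constraint (\ref{PseqWC}b). Because both the objective and the $i$-th inequality constraint depend on $\boldsymbol{W_q}$ only through terms of the form $\text{tr}(\boldsymbol{h_i^\ast W_q h_i}) = \text{tr}(\boldsymbol{W_q h_i h_i^\ast})$, the cyclic trace identity allows me to collect the Lagrangian into $\text{tr}(\boldsymbol{W_q C}) + (\text{terms independent of }\boldsymbol{W_q})$, where $\boldsymbol{C}$ is precisely the weighted combination displayed in (\ref{matC}). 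Strong duality, already established immediately after (\ref{PseqWC}) via Slater's condition, guarantees that solving the inner problem in $\boldsymbol{W_q}$ at the optimal dual variables characterizes the primal optimum.

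The second step is to exploit the fact that $\boldsymbol{W_q}$ is a transmit covariance matrix and hence Hermitian positive semi--definite. Writing the eigendecompositions $\boldsymbol{W_q} = \boldsymbol{U_q \Lambda_q U_q^\ast}$ and $\boldsymbol{C} = \boldsymbol{U_C \Lambda_C U_C^\ast}$, with $\lambda_{C,1} \geq \lambda_{C,2} \geq \ldots \geq \lambda_{C,N}$, and treating the eigenvalue profile $\boldsymbol{\Lambda_q}$ as fixed for the moment, I would invoke Von Neumann's trace inequality (equivalently the Hadamard--Ruhe result): for two Hermitian matrices with fixed spectra, $\text{tr}(\boldsymbol{W_q C})$ is extremized exactly when their eigenvectors coincide and are paired in matching order. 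Applied to the rearranged Lagrangian, this forces $\boldsymbol{U_q^\star} = \boldsymbol{U_C}$, with columns ordered so that the largest eigenvalues of $\boldsymbol{W_q^\star}$ align with the largest eigenvalues of $\boldsymbol{C}$, which is exactly the claim.

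The main obstacle I anticipate is handling the implicit PSD constraint $\boldsymbol{W_q} \succeq 0$ consistently with the sign bookkeeping so that the coefficients in $\boldsymbol{C}$ come out as in (\ref{matC}). One must introduce a matrix multiplier $\boldsymbol{Z} \succeq 0$ for the constraint $-\boldsymbol{W_q} \preceq 0$, derive the KKT stationarity condition which ties $\boldsymbol{Z}$ algebraically to $\boldsymbol{C}$, and then use the complementary slackness $\boldsymbol{Z W_q^\star} = 0$ to argue that $\boldsymbol{W_q^\star}$ is supported on the appropriate eigen--subspaces of $\boldsymbol{C}$; otherwise the inner maximization over an unbounded linear functional of $\boldsymbol{W_q}$ is degenerate. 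Once this is handled, the eigenvector alignment follows from the trace inequality as above, and the residual optimization over the diagonal eigenvalues $\boldsymbol{\Lambda_q^\star}$ (beam power allocation) is deferred to the linear program referenced in the paper's algorithmic pipeline and is not part of this theorem.
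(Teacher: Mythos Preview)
Your proposal is correct and matches the paper's proof essentially line for line: the paper forms the Lagrangian with multipliers $\chi$ and $\rho_i$, collects the $\boldsymbol{W_q}$-dependent terms via the cyclic trace identity into $\text{tr}(\boldsymbol{CW_q})$ with $\boldsymbol{C}$ as in (\ref{matC}), and then invokes the same trace-product eigenvalue inequality (cited there from \cite[Ch.~9, H.1.g.]{Olkin1979}, i.e., Von Neumann's trace inequality) to conclude $\boldsymbol{U_q^\star} = \boldsymbol{U_C}$ with matching eigenvalue ordering. Your anticipated obstacle in the third paragraph does not arise in the paper's argument---it never introduces a PSD multiplier $\boldsymbol{Z}$ but instead handles $\boldsymbol{W_q}\succeq 0$ implicitly by parameterizing $\boldsymbol{W_q}=\boldsymbol{U_q\Lambda_qU_q^\ast}$ with a fixed nonnegative spectrum and optimizing over $\boldsymbol{U_q}$ only, deferring the $\boldsymbol{\Lambda_q}$ optimization to the LP in Theorem~\ref{theorem4}, exactly as you note at the end.
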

\begin{proof}
See Appendix B.
\end{proof}

Theorem \ref{theorem5} provides the optimal directions of the energy beams for the beamforming matrix, $\boldsymbol{W_q}$. What is left now is to obtain the optimal power allocation across the energy beams, that is, the eigenvalues of the transmit covariance matrix for wireless charging. To this end, we substitute the optimal beam directions from Theorem \ref{theorem5} into $(P_{\text{WC}})$ and re-write the formulation in terms of the beam power allocation only as ($P_{\text{BP}}$) below. Beam power allocation, $\boldsymbol{\lambda_q}$, can then be obtained as a solution to a Linear Programming (LP) problem given in Theorem \ref{theorem4} below.

\begin{theorem}\label{theorem4}
The optimal beam power allocation which maximizes the received energy through wireless charging is derived as a solution of the LP problem below
\setcounter{equation}{19}
\begin{align*}\label{P4}
(P_{\text{BP}}): \ \max_{\boldsymbol{\lambda_q}} \ \ &\sum_{i=1}^{K} \boldsymbol{d_i^\ast \lambda_q} \tag{\theequation}\\
\text{s.t.} \ \ & \sum_{i=1}^K \lambda_{q,i}  \leq P, \ \ \ \  \lambda_{q,1} \geq ... \geq \lambda_{q,K} \geq 0 \tag{a-b}\\
&\boldsymbol{D \lambda_q} \leq \boldsymbol{b} \tag{c}
\end{align*}
where $\boldsymbol{\lambda_q} = [\lambda_{q,1}, ..., \lambda_{q,K}]^T$, $\boldsymbol{D} \in \mathbb{R}^{K \times K} = [\boldsymbol{d_1^\ast}...\boldsymbol{d_K^\ast}]$, $\boldsymbol{d_i}^\ast = \text{\emph{diag}}(\boldsymbol{r_i r_i^\ast})$, $\boldsymbol{r}_i^\ast = \boldsymbol{h}_i^\ast  \boldsymbol{U_C} = \boldsymbol{h}_i^\ast  \boldsymbol{U_q}^\star$ and $\boldsymbol{b} \in \mathbb{R}^{K \times 1} = [\pi_1 ... \pi_K]$, $\pi_i = \frac{e_i}{\xi_i T_c} \ \forall i = 1...K$. 
\end{theorem}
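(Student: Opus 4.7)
The plan is to substitute the eigendecomposition $\boldsymbol{W_q} = \boldsymbol{U_q \Lambda_q U_q^\ast}$ together with the optimal beam directions $\boldsymbol{U_q^\star} = \boldsymbol{U_C}$ from Theorem \ref{theorem5} into $(P_{\text{WC}})$, and then verify that the resulting optimization in the eigenvalues $\boldsymbol{\lambda_q}$ reduces exactly to the linear program $(P_{\text{BP}})$.

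First I would simplify each quadratic form $\boldsymbol{h_i^\ast W_q h_i}$ under this substitution. Defining $\boldsymbol{r_i^\ast} = \boldsymbol{h_i^\ast U_C}$ as in the theorem statement, the identity
\[
\boldsymbol{h_i^\ast U_C \Lambda_q U_C^\ast h_i} \;=\; \boldsymbol{r_i^\ast \Lambda_q r_i} \;=\; \sum_j \lambda_{q,j}\,|r_{i,j}|^2 \;=\; \boldsymbol{d_i^\ast \lambda_q}
\]
expresses the channel-weighted power for user $i$ as an inner product, with $\boldsymbol{d_i} = \textbf{diag}(\boldsymbol{r_i r_i^\ast})$. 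Summing over $i$ (absorbing the positive scaling $\xi_i T_c$, which does not affect the argmax) converts the objective of $(P_{\text{WC}})$ into the linear objective of $(P_{\text{BP}})$. The per-user energy cap (\ref{PseqWC}b) transforms into $\xi_i T_c \boldsymbol{d_i^\ast \lambda_q} \leq e_i$, which rearranges to $\boldsymbol{d_i^\ast \lambda_q} \leq \pi_i$ with $\pi_i = e_i/(\xi_i T_c)$; stacking the $K$ scalar inequalities yields constraint (c). The total transmit power constraint (\ref{PseqWC}a) becomes $\sum_j \lambda_{q,j} \leq P$ since unitary conjugation by $\boldsymbol{U_C}$ preserves the trace, giving constraint (a).

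Next I would justify the reduction from $N$ to $K$ eigenvalues. By the form of $\boldsymbol{C}$ in (\ref{matC}), only its first $K$ eigenvectors lie in the span of the channel vectors $\{\boldsymbol{h_i}\}$, while the remaining $N-K$ eigendirections are orthogonal to every $\boldsymbol{h_i}$. These orthogonal directions contribute nothing to the objective but still consume transmit power, so a simple exchange argument allows us to assume $\lambda_{q,j}=0$ for $j>K$ at optimality, collapsing the decision variable to $\boldsymbol{\lambda_q} \in \mathbb{R}^{K}$. Non-negativity of each $\lambda_{q,j}$ is inherited directly from the positive semidefiniteness of $\boldsymbol{W_q}$.

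The main obstacle I anticipate is making the ``restrict to $K$ beams'' step fully rigorous when the channel vectors $\{\boldsymbol{h_i}\}$ are not linearly independent, in which case the channel subspace has dimension strictly less than $K$ and the exchange argument must be adapted to handle the degenerate directions without breaking the per-user constraints. A secondary delicate point is justifying the ordering $\lambda_{q,1} \geq \ldots \geq \lambda_{q,K} \geq 0$: this is most naturally read as a labeling convention that pairs $\lambda_{q,j}$ with the $j$-th sorted eigenvector of $\boldsymbol{C}$, so the ordering can be imposed without loss of generality by relabeling the columns of $\boldsymbol{U_C}$ and the corresponding entries of $\boldsymbol{\lambda_q}$ once any LP-optimal point is obtained.
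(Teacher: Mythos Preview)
Your approach matches the paper's: substitute $\boldsymbol{U_q^\star}=\boldsymbol{U_C}$ into $(P_{\text{WC}})$, rewrite each $\boldsymbol{h_i^\ast W_q h_i}$ as $\boldsymbol{d_i^\ast \lambda_q}$, and read off the LP; your exchange argument for the $N\to K$ reduction is in fact more explicit than the paper's one-line assertion that the last $N-K$ eigenvalues vanish.

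The only substantive divergence is your justification of the ordering constraint $\lambda_{q,1}\ge\cdots\ge\lambda_{q,K}$. The paper does not treat it as a relabeling convention. Because the columns of $\boldsymbol{U_C}$ are already fixed in the order of descending $\lambda_{C,i}$, the descending order on $\boldsymbol{\lambda_q}$ is precisely the condition that makes the trace inequality (\ref{trineq2}) tight, so that $\boldsymbol{W_q}=\boldsymbol{U_C\Lambda_q U_C^\ast}$ actually maximizes the Lagrangian $\mathcal{L}_{\text{WC}}$ at the given dual point $(\chi,\boldsymbol{\rho})$. Relabeling the columns of $\boldsymbol{U_C}$ after the fact would permute the entries of every $\boldsymbol{d_i}$ and hence change the LP data itself, so within the primal-dual scheme the ordering is a genuine structural constraint inherited from Theorem~\ref{theorem5}, not a cosmetic choice that can be imposed post hoc.
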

\begin{proof}
Obtained by substituting optimal beam directions from Theorem \ref{theorem5} in $(P_{\text{WC}})$. Details in Appendix C.
\end{proof}

Note that in the above solutions for $(P_{\text{WC}})$, since the goal is energy maximization, the eigenvalues of $\boldsymbol{W_q}$ and $\boldsymbol{C}$ are of the same order. All the eigenvectors of each matrix are ordered according to their corresponding eigenvalues. The optimal solutions derived thus far are specific to the respective problems $(P_{\text{CO}})$ and $(P_{\text{WC}})$, and thus reveal the optimal solution structure that otherwise would be obscured by using a generic solver. Next we use these optimal solutions to design customized algorithms to solve these problems.

\section{Algorithm Design}
\subsection{Sequential And Nested Algorithm Structures}
\setlength{\belowcaptionskip}{-20pt}
\begin{figure}[t]
\centering
\includegraphics[scale = 0.6]{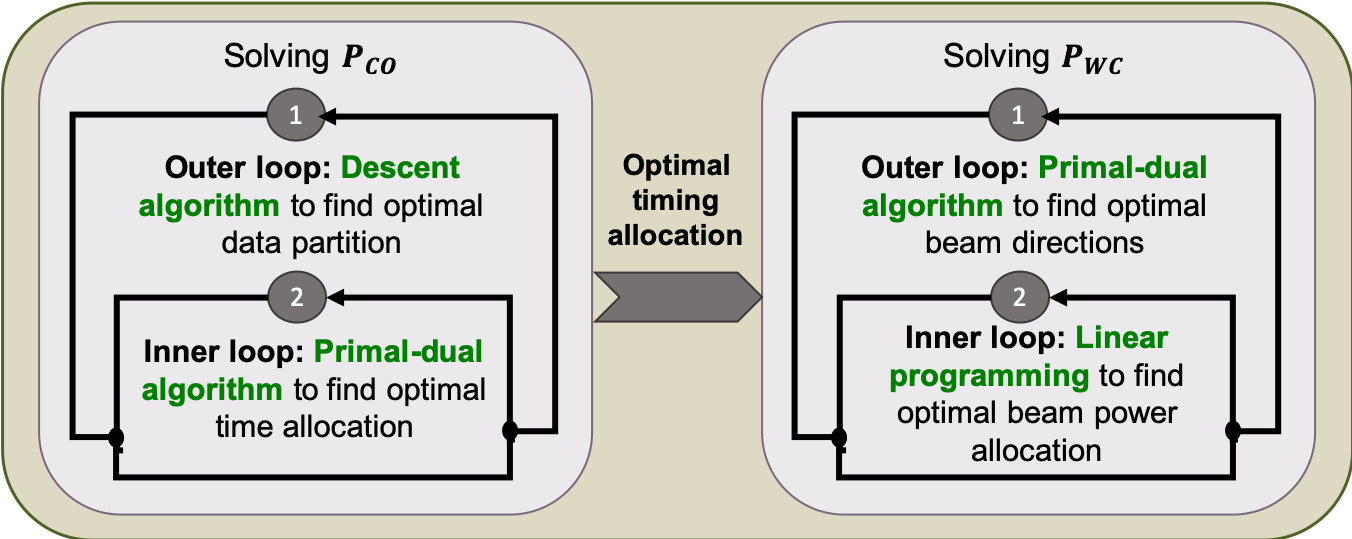}
\caption{Algorithm architecture for $P_{\text{CO}}$ and $P_{\text{WC}}$}
\label{solution_flow}
\end{figure}
In this section, we discuss the algorithm structure to solve the two sequentially formulated problems $P_{\text{CO}}$ and $P_{\text{WC}}$. Based on the way these two problems are formulated, $P_{\text{CO}}$ will be solved first to obtain the optimal data partitioning and time allocation for computation offloading. This optimal time allocation will then be used in $P_{\text{WC}}$ as a given parameter in order to find the optimal energy beamforming structure.

The algorithm for solving $(P_{\text{CO}})$ is designed based on Lemma \ref{lemma1} to have a nested architecture with an outer and an inner loop, in which the outer loop solves for $s_i$ decrementally while the inner loop solves for $\boldsymbol{t}$ at a fixed value of $s_i$. Specifically, the nested algorithm works as follows. We first initialize the offloaded bits $\boldsymbol{s}$ and the dual variables in the outer algorithm. At the current value of $s$, the inner algorithm is executed, for which we use a primal-dual approach employing a subgradient method. At convergence where the stopping criterion for the dual problem is satisfied, the inner algorithm returns the control to the outer algorithm. Based on the newly updated primal solution from the inner algorithm, we proceed to updating $\boldsymbol{s}$ by some $\Delta s_i$ for each user for the next iteration of the outer algorithm, using a latency aware descent algorithm. Similar to \cite{Malik2020}, the latency aware descent algorithm is based on the standard Newton method, where the computation cost for each Newton iteration requires $\mathcal{O}(n^3)$ flops~\cite{Malik2020}, with a novel modification to the classical stopping criterion to account for the latency constraint.
\begin{algorithm}[t]
\caption{Solution for $(P_{\text{CO}})$ and $(P_{\text{WC}})$} 
\text{Given:} Distances $d_{i} \ \forall i$. Channel $\boldsymbol{H = G^{T}}$. Precision, $\epsilon_1, \epsilon_2$, Data $u_i$, Latency $T_d$. \text{Initialize:} $s_i$\\
\textbf{Begin Outer Algorithm for $(P_{\text{CO}})$}\\
\textbf{Given} a starting point $\boldsymbol{s}$, \textbf{Repeat}
\begin{enumerate}
\item Compute $\Delta \boldsymbol{s}$
\item \textbf{Begin Inner Algorithm}
\begin{itemize}[leftmargin=*]
\item Calculate $t_{u,i}$ and $t_{d,i}$, using (\ref{LambertSol}). Then $T_1^\star = \max t_{u,i}^\star$ and $T_3^\star = \max t_{d,i}^\star$.
\item Update $p_i$ and $\eta_i$ using (\ref{poweralloc}) and calculate $\sigma_{1,i}^2$ and $\sigma_{2,i}^2$.
\item Find dual function in (\ref{DualFn}), stop if dual variables converge with $\epsilon_2$, else find subgradients in (\ref{subgrads}-d), update dual-variables using subgradient method and continue
\end{itemize}
\textbf{End Inner Algorithm}
\item \textit{Line search and Update}. $s_i := s_i + t_i\Delta s_i$.
\end{enumerate}
\textbf{Until} stopping criterion is satisfied with $\epsilon_1$ or latency constraint $T_d$ is met.\\
\textbf{End Outer Algorithm for $(P_{\text{CO}})$}\\
\text{Given:} Optimal time allocation from (P2) in Step 2 above, find $T_c = T_d - T_1^\star - T_3^\star$\\
\textbf{Begin Algorithm for $(P_{\text{WC}})$}
\begin{itemize}
\item Find  $\boldsymbol{\lambda_q}^\star$ from ($P_{\text{BP}}$), and $\boldsymbol{W_q}^\star = \boldsymbol{U_C \Lambda_q^\star U_C^\ast}$, where $\boldsymbol{\Lambda_q^\star} = \textbf{diag}(\boldsymbol{\lambda_q}^\star)$
\item Find dual function in (\ref{PWCdual}), Stop if dual variables converge with $\epsilon_2$, else update using subgradients in (\ref{subgrads})
\end{itemize}
\textbf{End Algorithm for $(P_{\text{WC}})$}
\end{algorithm}
Problem $(P_{\text{WC}})$ solves for the transmit covariance matrix $\boldsymbol{W_q}$ as an independent problem after obtaining the optimal time allocation solution from $(P_{\text{CO}})$ to calculate the charging time $T_c$ as $T_c = T_d - T_1^\star - T_3^\star$. The algorithm for solving $(P_{\text{WC}})$ also has a nested structure, with an outer algorithm to establish the optimal beam directions and an inner algorithm for the beam power allocation. Specifically, at each iteration of $(P_{\text{WC}})$, an outer algorithm step finds the optimal dual variables for the beam direction solutions in Theorem \ref{theorem5} via a subgradient method, and calls to an inner algorithm which solves the LP problem ($P_{\text{BP}}$) in Theorem \ref{theorem4} for the optimal beam power allocation using a standard convex solver. Once the beam power allocation is found, the inner algorithm returns to the outer one in order to update the dual variables, and the process continues until convergence is reached in the outer algorithm. In the case of $(P_{\text{WC}})$, the outer algorithm is primal-dual, and the inner algorithm is linear programming. The algorithm flow is depicted in Figure \ref{solution_flow} and steps for solving both problems $(P_{\text{CO}})$ and $(P_{\text{WC}})$ are given in Algorithm 1.

\subsection{Primal-Dual Algorithms}
For the inner optimization in $(P_{\text{CO}})$ and the outer algorithm in $(P_{\text{WC}})$, we design primal-dual algorithms where the primal variable are obtained as closed-form functions of the dual variables, which are found by solving the dual problem using a sub-gradient method. The dual-function for the convex optimization problem $(P_{\text{CO}})$ at a given $s_i$ can be defined as
\begin{equation}\label{DualFn}
    g_{\text{CO}}(\lambda_1,\boldsymbol{\beta, \xi_i, \phi}) = \inf_{\boldsymbol{t}} \mathcal{L}_{\text{CO}}(\boldsymbol{t}, \lambda_1, \boldsymbol{\beta, \xi_i, \phi})
\end{equation}
where $\mathcal{L}_{CO}$ is the Lagrangian for problem $(P_{\text{CO}})$  and the dual-problem is defined as
\begin{align}\label{PDual}
    P_{\text{CO}}\text{-dual: }\max \ &g_{\text{CO}}(\lambda_1,\boldsymbol{\beta, \xi_i, \phi})  \ \  \text{s.t. } \lambda_1 \geq 0, \beta_i, \theta_i, \phi_i \geq 0 \ \forall i = 1...K 
\end{align}
where $\lambda_1$, $\boldsymbol{\beta, \xi_i}$, and $\boldsymbol{\phi}$ are the dual variables associated with constraints (c-f) in (\ref{PseqCO}), respectively.

Based on the dual-function for the problem $(P_{\text{WC}})$ in (\ref{PWCdual}), the dual problem is given as
\begin{align}\label{Pseqdual}
 \text{$\text{P}_{\text{WC}}$-dual: }\: \min \ \ &g_{\text{WC}}(\boldsymbol{\rho}, \chi) \ \  \text{s.t.} \ \ \chi \geq 0, \rho_i \geq 0 \ \text{for } i = 1...K 
 \end{align} 
Using the closed form expressions for the primal variables in terms of the dual-variables as in Theorems \ref{theorem2}-\ref{theorem5}, the dual functions above are functions of only the dual-variables. 

The subgradient terms with respect to all dual variables of original problems $(P_{\text{CO}})$ and $(P_{\text{WC}})$ are as given below
\squeezeup
\setcounter{equation}{23}
\begin{align*}\label{subgrads}
&\nabla_{\lambda_1}\mathcal{L} = \sum_{j=1}^{3} T_j - T_{\text{delay}} \tag{\theequation a}\\
&\nabla_{\beta_i}\mathcal{L} = t_{u,i} - T_1, \ \ \nabla_{\phi_i}\mathcal{L} = t_{d,i} - T_3, \ \ \nabla_{\theta_i}\mathcal{L} = \frac{c_i q_i}{f_{u,i}}  + t_{u,i} - T_d, \ \ \ \ \ \  \tag{b-d}\\
&\nabla_{\rho_i}\mathcal{L} = \xi_i \text{tr} \left( \boldsymbol{h_i^\ast W_q h_i} \right )T_c  - e_i,  \ \ \ \ \  \ \nabla_{\chi}\mathcal{L} = \text{tr}(\boldsymbol{W_q}) - P \tag{e-f}
\end{align*}

For implementation of the primal-dual algorithms, we use the subgradient method to solve the constrained convex optimization problems $(P_{\text{CO}})$ and $(P_{\text{WC}})$~\cite{Boyd2003}. The designed algorithms find the subgradients for the negative dual function $-g_{\text{CO}}$, since the dual problem in (\ref{PDual}) is a maximization problem for the dual function, and for the positive dual function $g_{\text{WC}}$, since the dual problem in (\ref{Pseqdual}) is a minimization problem. At each iteration, the primal variables are updated based on Theorems \ref{theorem2}-\ref{theorem4}. The dual variables vector $x$ is updated as $\boldsymbol{x}^{(k+1)} = \boldsymbol{x}^{(k)} - \beta_k \boldsymbol{g}^{(k)}$, where $\beta_k$ is the $k^{\text{th}}$ step-size, and $\boldsymbol{g}^{(k)}$ is the subgradient vector at the $k^{\text{th}}$ iteration evaluated using the sub-gradient expressions in (\ref{subgrads}-f). We use the non-summable diminishing step size, setting $\beta_k = 1/\sqrt{k}$, using which the algorithm is guaranteed to converge to the optimal value with a theoretical iteration complexity of $\mathcal{O}(1/\epsilon^2)$~\cite{Boyd2003}\cite{Ryan2015}. Since the subgradient method is not a descent method, the algorithms keep track of the best point for the dual functions at each iteration of the inner algorithm. These primal-dual update steps are repeated until the desired level of precision, $\epsilon_2$, is reached for the stopping criterion.

For the diminishing step size as that considered, the subgradient method is guaranteed to converge as $k \to \infty$~\cite{Boyd2003}. In the subgradient method, since the key quantity is not the function value but rather the Euclidean distance to the optimal set~\cite{Boyd2003}, therefore, for our implementation we define the stopping criterion as: $\lVert \boldsymbol{g}^{(k+1)} - \boldsymbol{g}^{(k)} \rVert_2 \leq \epsilon_2$. 

\section{Numerical Results}
\setlength{\belowcaptionskip}{-20pt}
\begin{figure}[t]
                \centering
                \includegraphics[scale=0.375]{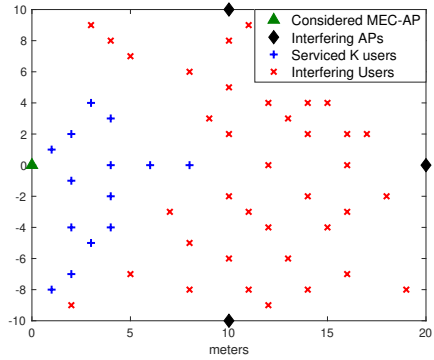}
                \caption{Network Layout}
                \label{system_model}
\end{figure}
In this section, we evaluate the solution of the sequential problem formulation with respect to energy and time consumption, the partition of bits offloaded to the MEC for computation and the received energy via wireless charging. We consider a $20\text{m}\times20\text{m}$ area (typical service area for AR applications with bi-directional transmission \cite{22.104}) with 4 APs and 16 users randomly located with $K = 4$ users per AP's coverage area and $N = 100$ as shown in Figure~\ref{system_model}. For simulations, $w = 10^{-3}$, $T_d = 20$ms (for AR/VR applications~\cite{Intel2017}), $B = 5$MHz, $\tau_c = B T_d$, $\Gamma_1 = \Gamma_2 = 1.25$, $\mu = 2$, $\kappa_i = 0.5$pF, $\kappa_m = 5$pF, $c_i = 1000$, $d_m = 500$, $\gamma = 2.2$, $\sigma = 2.7$dB, $\sigma_r^2 = -127$dBm, $\sigma_k^2 = -122$dBm, $f_{u,i} = f_u = 1800$ MHz $\forall i$. Each MEC processor has 24 cores with maximum frequency of $3.4$GHz, and we use $f_{m,i} = f_m = \frac{24 \times 3400}{K}$ MHz $\forall i$. Transmit power available at user and AP is 23 dBm and 46 dBm respectively. To calculate the interference and noise power ($\sigma_{1,i}^2$, $\sigma_{2,i}^2$) which include massive MIMO pilot contamination and intercell interference, we assume that user terminals transmit at their maximum power, that is $p_{qi} = 23$dBm, and the interfering APs use equal power allocation in the downlink, that is $\eta_{qi} = \frac{1}{K} \ \forall i$. Numerical results are averaged over 100 independent channel realizations of $\mathbf{H}$ and $\mathbf{G}$. The results with increasing number of users in the network are averaged over 200 spatial realizations (randomly generated user locations).

\subsection{Comparison of Wireless Charging Schemes}
\setlength{\belowcaptionskip}{-20pt}
\begin{figure}[t]
\centering
\includegraphics[scale = 0.6]{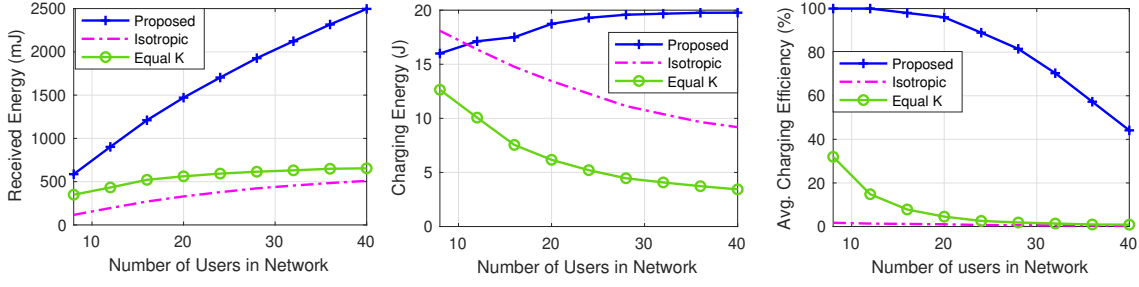}
\caption{Comparison of the proposed wireless charging scheme with isotropic wireless charging, and directed K-beam charging with equal power allocation (equal K)}
\label{beam_effect}
\end{figure}
Figure \ref{beam_effect} shows a comparison of the proposed maximization wireless charging scheme with two other schemes: (i) isotropic scheme where $\boldsymbol{W_q} = \frac{P}{N}\boldsymbol{I}$ and equal charging power $P/N$ is allocated across all $N$ antennas of the AP, and (ii) equal K with directional charging using the beamforming directions proposed in Theorem \ref{theorem5}, but with equal power allocation $P/K$ across $K$ energy beams. For fairness of comparison with the sequential scheme, we use power scaling for the other two schemes such that each user only receives an amount of energy at most equal to requested, similar to the sequential scheme. Since wireless charging is proposed as a billable service for future networks, this is also a necessary design consideration from the service providers' and consumers' perspectives. 

Figure \ref{beam_effect} shows the received energy on the left, the transmitted energy in the middle, and the average charging efficiency on the right. Average charging efficiency (per time block) is defined as the average percentage of received energy, in the $q^{th}$ time block as denoted by $(q)$, at the users end compared to the requested energy, given as
\begin{equation}
\text{Avg. Charging Efficiency }(\%) = \frac{{\sum_{i=1}^{K} \xi_i \text{tr}(h_i^\ast \boldsymbol{W_q} h_i) T_c}^{(q)}}{\sum_{i=1}^{K} e_i^{(q)}}
\end{equation}
Note that the requested energy at the $q^{th}$ time block excludes the amount of energy requests already fulfilled in the previous time block(s). As illustrated in this figure, the sum received energy for the energy maximization sequential scheme is significantly larger than the other two schemes. Beamforming with equal power allocation scheme performs better than the isotropic scheme, since it consumes lesser charging energy and still delivers higher energy to the users. Comparing the average charging efficiency for all the schemes, however, the opportunistic wireless charging maximization scheme enables substantially higher charging efficiency. The average efficiency is seen to decrease with an increase in the network size as expected.
\setlength{\belowcaptionskip}{-20pt}
\begin{figure}[t]
\centering
\includegraphics[scale = 0.65]{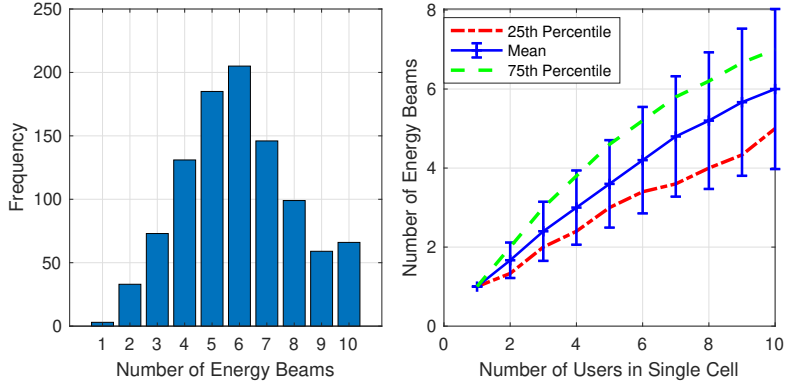}
\caption{Number of charging beams for K = 10 users (left), average number of charging beams for increasing no. of users in each cell (right).}
\label{num_beams}
\end{figure}
Another interesting finding presented in Figure \ref{num_beams} is the optimal number of energy beams for $K = 10$ users per cell and for an increasing number of users in the network. For the isotropic wireless charging, there are always $N > K$ energy beams. For the case of $K$ beams with equal power allocation, the number of beams is equal to the number of users in the cell. While multiple energy beams may be necessary for a multi-user system as also previously discussed in~\cite{Zeng2017}, the optimal number of energy beams for the proposed wireless charging scheme is usually less than the number of users. Since each energy beam can contribute as additional RF charging sources for neighboring users, the transmit beamforming can be intelligently designed as proposed to limit the number of energy beams which can prevent energy losses caused by transmitting energy in numerous directions. Therefore, for received energy maximization, the results show that on average, the optimal number of beam is much lower than the number of users in order to deliver the highest charged energy.

\subsection{Charging Profile}
\setlength{\belowcaptionskip}{-20pt}
\begin{figure}[t]
\centering
\includegraphics[scale = 0.5]{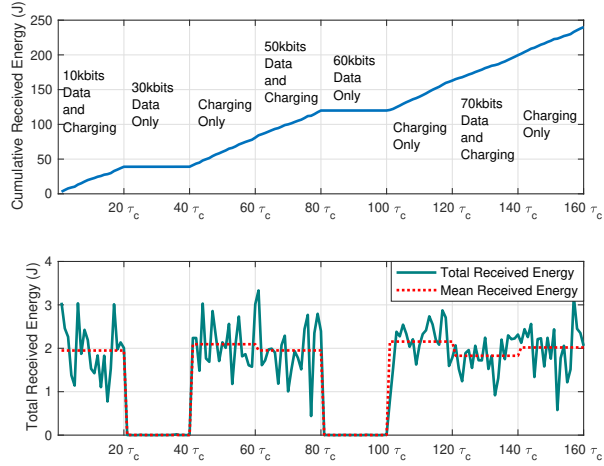}
\caption{A typical example of system performance for K users over time under three modes (a) Computation and Charging, (b) Computation only, and (c) Charging only}
\label{system_profile}
\end{figure}
\setlength{\belowcaptionskip}{-20pt}
\begin{figure}[t]
\centering
\includegraphics[scale = 0.5]{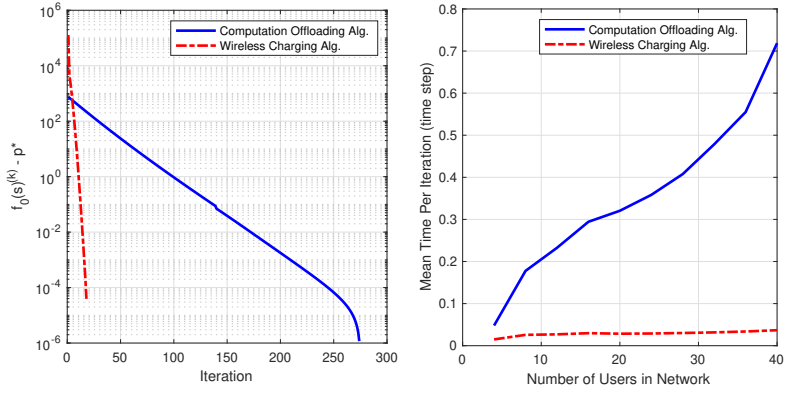}
\caption{Algorithm convergence (left) and mean time per iteration (right) for $P_{\text{CO}}$ and $P_{\text{WC}}$}
\label{alg_conv}
\end{figure}

Figure~\ref{system_profile} shows a typical example of the system's charging performance over time under the three modes of operation, namely, data and charging, data only, and charging only. The charging only and data only modes are special cases (or subsets) of the data and charging mode. For the joint data and charging mode, both data $(u_i)$ and energy requests $(e_i)$ are non-zero, that is $u_i>0 , e_i >0 \ \forall \ i$. For the data only mode, $e_i=0$ and for the charging only mode, $u_i=0$. The figure shows the time profile for the received energy by the users. The time axis is plotted in terms of the coherence interval $\tau_c$, to show that the energy values are calculated for a new channel realization after every coherence interval which corresponds to the variation in the received energy value over time in the bottom plot. For the results shown we assume $\tau_c = B T_d$, with $T_d = 20$ms. The cumulative energy on the top figure show that during the data only operation in which no users request wireless charging, there is no increase in the charged energy as expected. Correlating with the bottom plot, we see a decrease in the mean received energy during the joint phase of computation and power transfer (data and charging) as compared to the charging only phase. The results verify that our algorithm works as expected since with computation, a portion of time from $\tau_c$ is spent on data computation and wireless transmission, as compared to the charging only mode where the entire duration is spent for wireless charging. The cumulative top plot show that over an extended period of time over both computation and non-computation intervals, wireless charging can deliver a significant amount of energy.
\subsection{Algorithm Convergence}
Figure \ref{alg_conv} shows, on the left, the convergence of the two algorithms solving optimization sub-problems $P_{\text{CO}}$ and $P_{\text{WC}}$ with $u_i = u = 10\text{kbits}, \ e_i = e = 1\text{J} \ \forall i$. The algorithm for $P_{\text{WC}}$, or $P_{\text{WC}}$ \textit{algorithm} in short,  based on nested subgradient method and linear programming converges in significantly fewer iterations compared to the algorithm for $P_{\text{CO}}$, or $P_{\text{CO}}$ \textit{algorithm}, based on nested latency-aware Newton descent and subgradient methods. Not only does $P_{\text{WC}}$ \textit{algorithm} converge in fewer iterations compared to the $P_{\text{CO}}$ \textit{algorithm}, the time taken per iteration is also shorter for $P_{\text{WC}}$ as shown in Figure \ref{alg_conv} on the right. The computation offloading $P_{\text{CO}}$ \textit{algorithm} optimizes for data partitioning and time allocation for each user, leading to the number of optimizing variables for $K$ users as $2K$. Furthermore, these variables inherently affect the power allocation in (\ref{poweralloc}) and consequently the interference terms in uplink and downlink rate calculation in (\ref{rate_ul}) and (\ref{rate_dl}) for each user, this co-dependence adds complexity to the $P_{\text{CO}}$ \textit{algorithm} and requires a larger number of iterations to converge. These interdependent computations make the mean time per iteration increase almost exponentially with the number of users in the network as seen in the right plot of Figure \ref{alg_conv}. The wireless charging $P_{\text{WC}}$ \textit{algorithm}, on the other hand, calculates the beam directions for all $K$ users through a single matrix factorization per iteration as in Theorem 2. The power allocation per energy beam is then solved via an efficient inner linear programming algorithm which scales slowly with the number of users in the network. For our implementation on a personal computer, the time step unit in Figure \ref{alg_conv} is a second, however for faster machines, such as MEC servers, with the high-performance CPUs, this time-step may be significantly smaller.

\subsection{Effect of the Amount of Data Requested and Partitioning}
\setlength{\belowcaptionskip}{-20pt}
\begin{figure}[t]
        \begin{minipage}{0.5\textwidth}
                \centering
               \includegraphics[scale = 0.45]{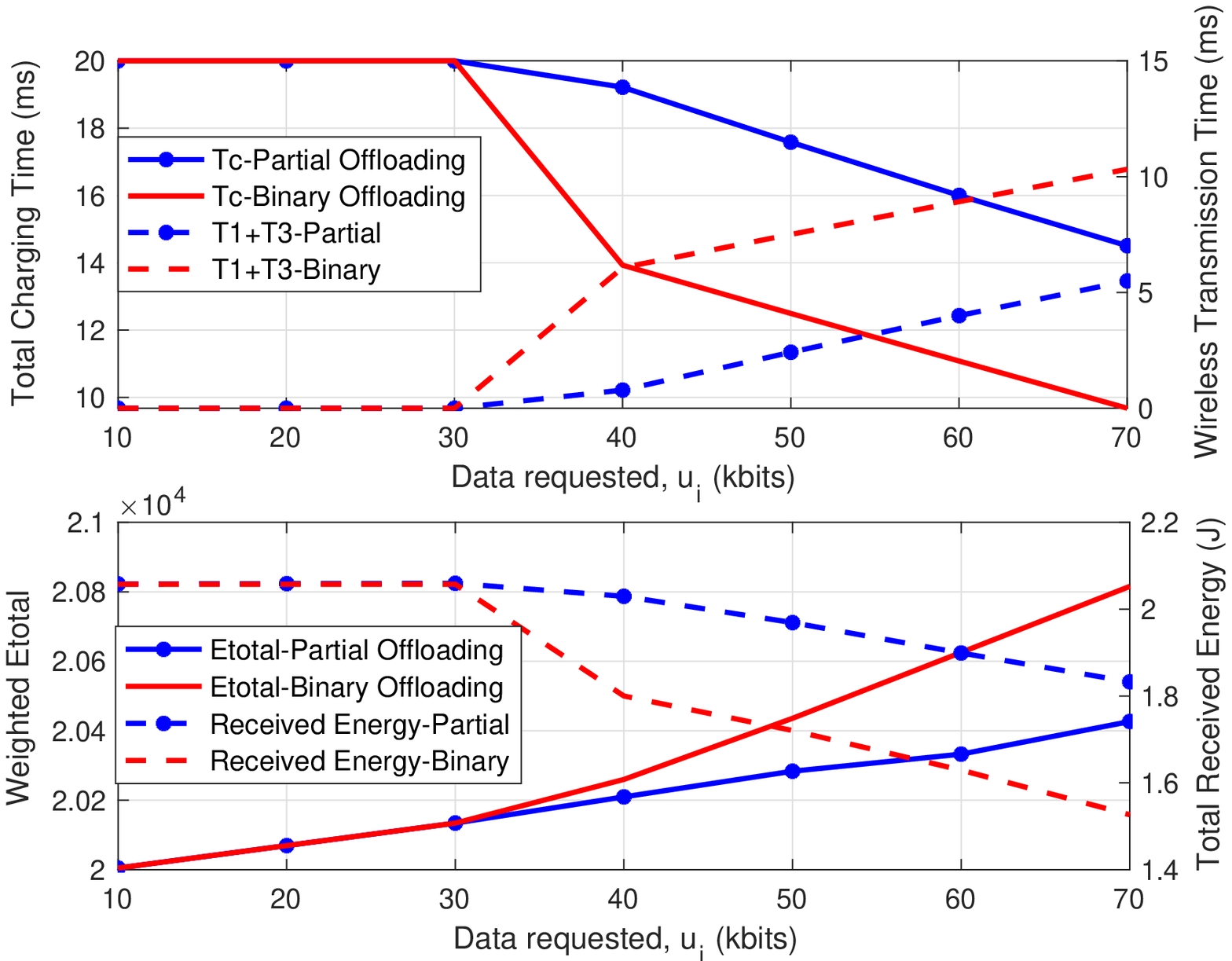}
                \caption{Time and energy consumption for partial and binary offloading schemes}
\label{data}
        \end{minipage}
       \hfill
        \begin{minipage}{0.5\textwidth}
                \centering
               \includegraphics[scale = 0.45]{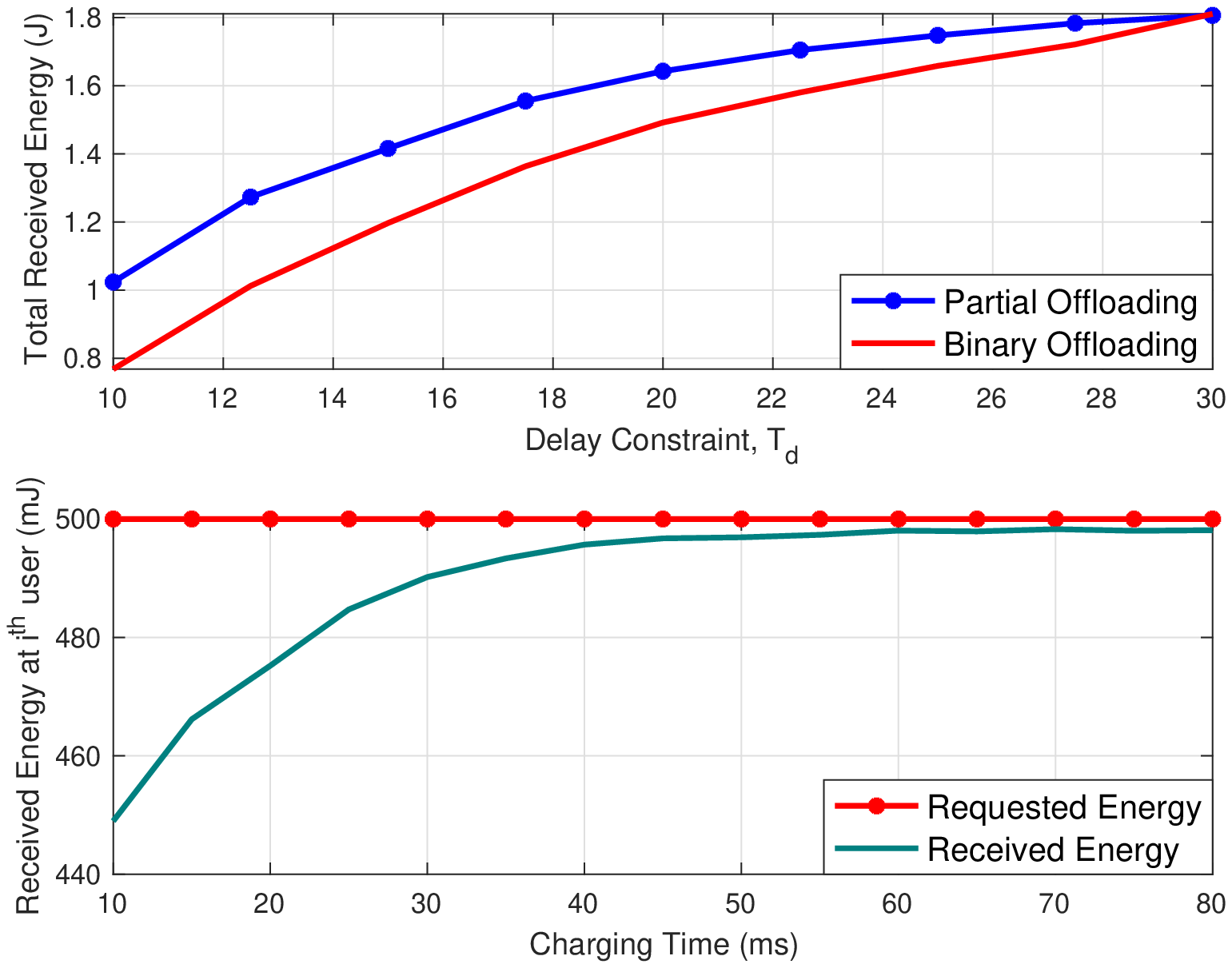}
\caption{Charged Energy Received for K = 4 users at (i) all users, each requesting 500mJ of energy (top), (ii) the $i^{\text{th}}$ user requesting 500mJ of energy (bottom) under partial offloading scheme}
\label{time_figure}
        \end{minipage}
\end{figure}

Figure~\ref{data} shows a comparison of the proposed partial offloading scheme, where data partitioning is used to divide the computation between the MEC and each user, with the binary offloading scheme where the task is atomic and is either offloaded or computed locally as a whole. We compare the time and energy consumption for the two schemes as the amount of data requested is increased under a fixed latency constraint of $T_d = 20$ms. To evaluate the solution for the binary offloading scheme, we consider all possible binary offloading combinations and choose the one with the lowest overall energy consumption. We see significant disparity between the binary and partial offloading schemes when large amounts of data are requested. For low data requests, local processing at users is optimal so both schemes consume the same energy and the entire duration is spent for wireless charging by the MEC concurrently with the local computation at the users. For larger data requests, however, the binary offloading scheme spends far less time for charging, since the time for wireless transmission to offload all the data to the MEC is greater. Owing to this increased time for wireless transmission in the binary scheme, the overall weighted system energy consumption is much larger than that of partial offloading. Partial offloading not only results in a lower overall weighted energy consumption, but also leads to higher received energy at the users  during wireless charging by the MEC because of the longer charging time. Partial offloading with data partitioning therefore appears as a potent design variable for the resource allocation problem, with significant impact on the wireless charging capability of the system.

\subsection{Effect of the Latency Constraint and Charging Time}
Figure \ref{time_figure} (top) shows the total received or charged at the users, each requesting 500mJ of energy, as the delay constraint is relaxed, that is, $T_d$ is increased at a fixed amount of requested data, $u_i = 50$ kbits. For this amount of data, binary offloading results in lower received energy since all data is offloaded to the MEC to meet the latency requirement. This results in larger time consumption for wireless transmission, consequently reducing the charging time and hence the charged energy. For relaxed latency, however, both binary and partial offloading schemes compute data locally, and hence the plots converge.

Figure \ref{time_figure} (bottom) shows the received energy, that is the amount of charge the user receives through wireless power transfer, as the charging time is increased. We show the requested and received energy for one user in a 16 user network, where each user requests 500 mJ of energy from the MEC, and the network is in \textit{charging only} mode, that is, the users do not request any data for computation. For longer charging times, the MEC fulfills the user's demand for wireless charging almost completely.

\subsection{Effect of the amount of Energy Requested}
Figure \ref{energy} shows the amount of energy received by a user through wireless charging, as the amount of energy requested by the user is increased in the \textit{data and charging} mode, that is the users jointly request data computation and wireless charging. We assume all the users requesting the same amount of energy, that is $e_i = e \ \forall i$. For lower amounts of requested energy, we see that the MEC-AP strives to fulfill the energy demand to a large extent, however, as the energy demands are increased by all the users simultaneously, the wireless charging by the MEC-AP cannot cope with the wireless charging demand in full. We compare the amount of energy received through wireless charging for three scenarios (a) all users request 70 kbits of data for computation, that is $u_i = 70$ kbits $\forall i$ under a latency requirement of $T_d = 20$ms, (b) all users request the same amount of data $u_i = 70$ kbits but the latency requirement is relaxed, that is $T_d = 40$ ms, and (c) all users request a smaller amount of data, that is $u_i = 30$ kbits for the strict latency requirement of $T_d = 20$ ms. When a lower amount of data is requested for computation, we see an increase in the received energy, since lesser time is spent for data transmission, which leaves more time for wireless charging. For a relaxed latency constraint, however, we see significant increase in the charged energy, since for a larger $T_d$ the time for power transfer is increased proportionally. 
\setlength{\belowcaptionskip}{-20pt}
\begin{figure}[t]
        \begin{minipage}{0.5\textwidth}
                \centering
                \includegraphics[scale = 0.45]{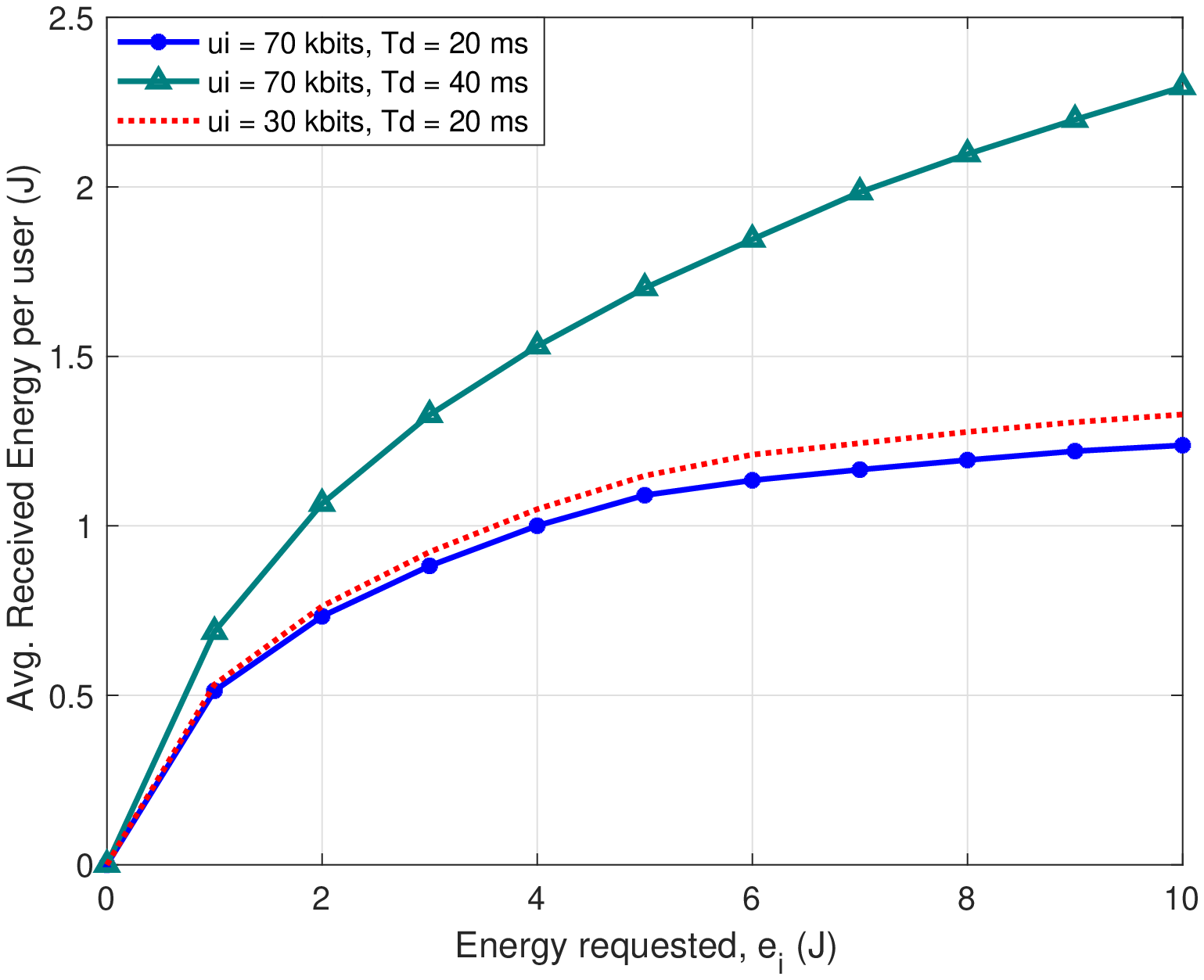}
\caption{Received energy by a user as it requests more energy under different computation and time requirements}
\label{energy}
        \end{minipage}
       \hfill
        \begin{minipage}{0.5\textwidth}
                \centering
                \includegraphics[scale = 0.45]{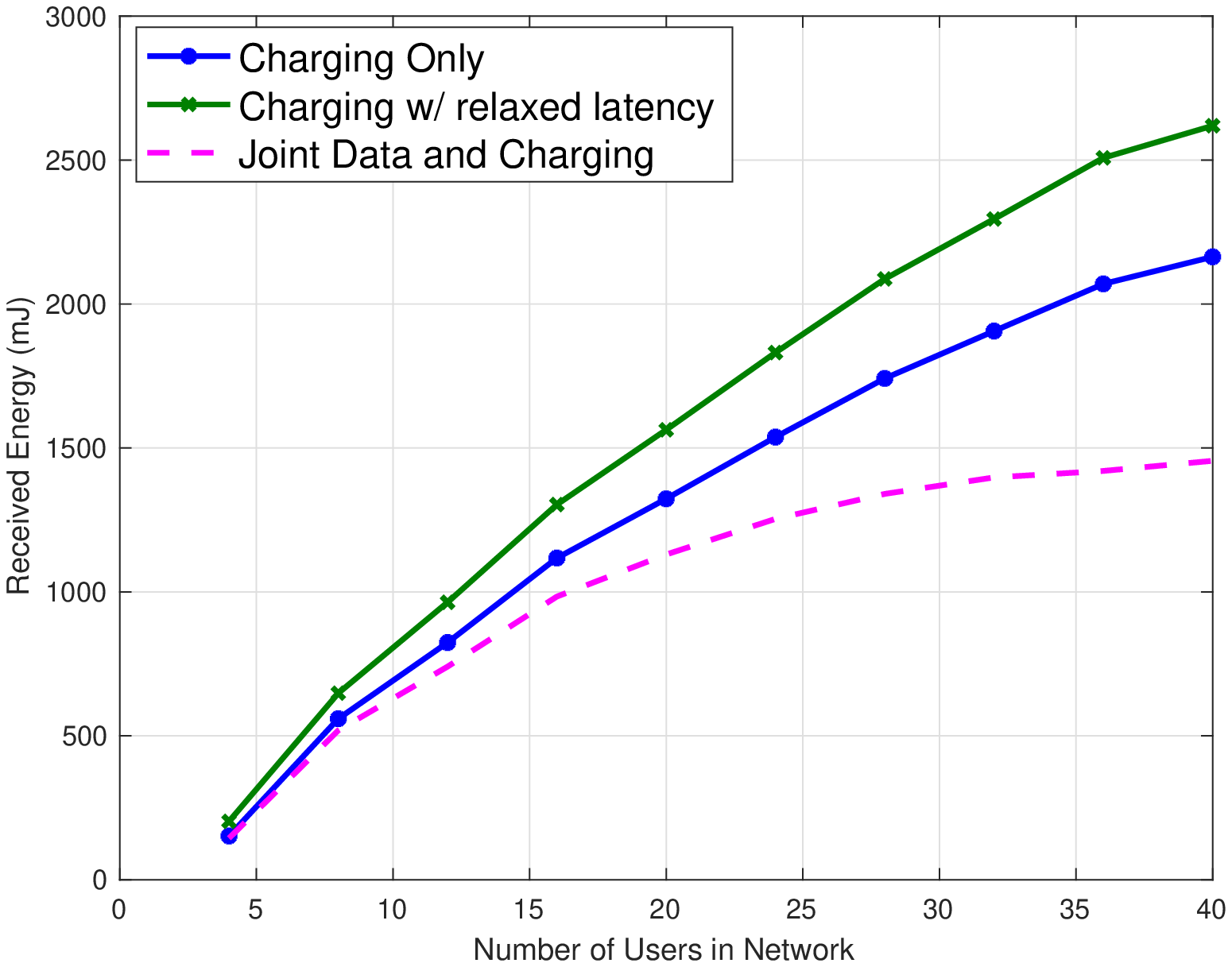}
\caption{Charged Energy Received at the Users Requesting 500mJ of Energy with and without data computation as the network is expanded}
\label{network_plot}
        \end{minipage}
\end{figure}
\subsection{Effect of Network Size}
Figure \ref{network_plot} shows the total amount of energy received by all users during the wireless charging function, as the number of users in the network is increased, each requesting $e_i = 500$mJ of energy. We compare the charged energy under different scenarios, namely (i) \textit{Charging Only} where each user only has energy requests and no data to offload, (ii) \textit{Charging w/ relaxed latency} with charging-only mode but the latency constraint is relaxed from 20 ms to $T_d = 40$ ms, and (iii) \textit{Joint Data and Charging} where each user request $u_i = 70$ kbits of data for computation along with its energy request. Relaxing the latency constraint, and having no data for computation, each option significantly increases the received energy during the wireless charging phase. While an increasing trend in the received energy with increasing number of users is observed when the network size is small, for larger network sizes, the total received energy is bounded by the physical constraints of the network such as maximum transmit power constraint and the number of antennas at the AP, and leads to diminishing received energy gain when the network size is further increased. 

\subsection{Effect of Transmission Power Allocation}
\setlength{\belowcaptionskip}{-20pt}
\begin{figure}[t]
\centering
\includegraphics[scale = 0.5]{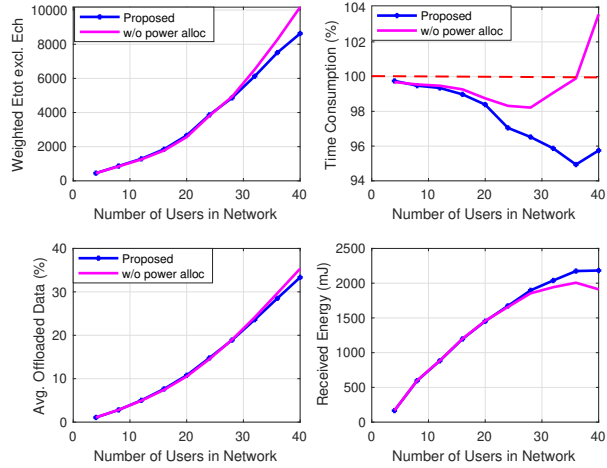}
\caption{Weighted total energy consumption excluding charging energy, percentage offloaded bits, total time consumption and total received energy with and without transmit power allocation}
\label{power_plot}
\end{figure}
Figure \ref{network_plot} also shows the effect of transmission power allocation in data transferring phases of computation offloading on the received energy as the network is increased, where each user requests $e_i = 500$mJ of energy and $u_i = 40$ kbits of data for computation. Using the proposed algorithm, we compare the energy consumption at the users and MEC-AP with and without transmission power control. In our proposed scheme, transmit power control is implemented indirectly through the time and data partitioning, given by (\ref{poweralloc}) via the optimized time allocation variables $\boldsymbol{t_u}$ and $\boldsymbol{t_d}$. In the scheme without power allocation, we fix the transmit power such that the MEC-AP allocates equal power for transmission beamforming to all users in downlink, and all users always use the maximum transmit power available. Note that this is the power allocation for data transmission in offloading and not to be mistaken with power allocation for the energy beamforming as discussed in Section VI. 

The top left plot shows the total weighted energy consumption excluding charging energy, the bottom left plot shows the percentage of offloaded bits, the top right plot shows the total time consumption and the bottom right plot shows the total received energy with and without transmit power allocation. We observe no difference in the performance for small network sizes, however, for large network, having no transmit power control leads to slightly more data offloaded to the MEC-AP, hence increasing the  energy consumption. This would also mean an increased wireless transmission time, and consequently we see lower received energy in this regime when transmission power allocation is not employed since wireless charging is only performed outside the wireless transmission time.

For the weighted total energy consumption, we exclude the energy consumption due to wireless charging to highlight the difference due to transmit power allocation in the data transmission phase only. An important finding for large network sizes is that without transmission power allocation, the network cannot cope with the data and energy requests within the latency constraint, and we see that the percentage time consumption exceeds 100\% for greater than 35 users in the network. Transmit power control is only consequential for larger network sizes and can be excluded from the optimization problem to reduce complexity in small networks. 

\section{Conclusion}
In this paper, we examined a massive MIMO enabled multi-access edge computing network providing computation offloading and on-request wireless charging to its connected users under a round trip latency constraint. We formulated a novel system-level problem to minimize the energy consumption for data offloading and maximize the received energy from wireless charging, and design efficient sequential algorithms to solve for data partitioning, time allocation and transmit energy beamforming matrices. Our algorithms demonstrated that data partitioning is a potent optimizing variable, as partial data offloading leads to significant reduction in system energy consumption, lower transmission times and consequentially higher amount of received charged energy at users. On the other hand, MEC-AP transmit power allocation for downlink data transmission has little effect on the system energy consumption for small network sizes.

Our algorithm also illustrated that even with significant amounts of data to be computed, the network can deliver decent amounts of charged energy to the users over an extended period of time, therefore validating a practical coexistence of computation offloading and wireless charging. A comparison with isotropic power transfer and equal power energy beamforming shows that optimal design of the energy beamforming directions and beam power allocation in wireless charging is crucial for energy efficiency, and is necessary for adopting on-request wireless charging as a billable service for future networks.


\section{Appendix}
\subsection{Appendix A - Proof for Lemma \ref{lemma1}}
Consider problem $(P_{\text{CO}})$ in (\ref{PseqCO}) at fixed values of $s_i$. The objective function is affine and convex. 
\begin{itemize}
\item Constraints (c), (e), (f), (h) for (P) in (\ref{PseqCO}) are linear. 
\item For constraints (a) and (b), the first terms are of the form $f(x) = x 2^{\frac{1}{x}}$ in $t_{u,i}$ and $t_{d,i}$ respectively, with $\nabla^2_x f(x) = \frac{2^{\frac{1}{x}}}{x^3}$ > 0 for $x > 0$, and hence $f(x)$ is convex in $x$. 
\item Relevant constraints are also linear and convex in $E_{u}$, $E_{m}$ and $T_j \ \forall j$. 
\end{itemize}
Based on the above, the objective is convex and all the constraints are convex in the remaining variables. Thus the problem is convex at given $s_i$.

\subsection{Appendix B - Proof for Theorem \ref{theorem5}}
We obtain matrix $\boldsymbol{C}$ from the Lagrangian for problem $(P_{\text{WC}})$ as follows
\setcounter{equation}{24}
\begin{align*}\label{Lsub2}
\mathcal{L}_{\text{WC}} &= \xi_i T_c \sum_{i=1}^{K} \xi_i \text{tr}(h_i^\ast \boldsymbol{W_q} h_i) T_c + \chi \left(\text{tr}(\boldsymbol{W_q}) - P \right) + \xi_i T_c \text{tr} \left(\left (\sum_{i=1}^{K} \rho_i h_i h_i^\ast  \right)\boldsymbol{W_q} \right) - \sum_{i=1}^{K} \rho_i e_i \\
&= \xi_i T_c \text{tr} \left( \left [ \chi \boldsymbol{I} +  \xi_i T_c \sum_{i=1}^{K} (1 + \rho_i) \boldsymbol{h_i h_i^\ast}\right ]\boldsymbol{W_q}\right) - \sum_{i=1}^{K} \rho_i e_i - \chi P \\
&= \xi_i T_c \text{tr} \left( \boldsymbol{CW_q}\right) - \sum_{i=1}^{K} \rho_i e_i - \chi P \tag{\theequation}
\end{align*}
The dual-function for the problem $(P_{\text{WC}})$ can be defined as
\begin{equation}\label{PWCdual}
g_{\text{WC}}(\boldsymbol{\rho}, \chi) = \max_{\boldsymbol{W_q}} \mathcal{L}_{\text{WC}} (\boldsymbol{W_q, \rho},\chi)
\end{equation}
Here we set $\boldsymbol{U_q = U_C}$ to maximize the Lagrangian $\mathcal{L}_{\text{WC}}$ such that, by applying the inequality relating trace of matrix product to the sum of eigenvalue products~\cite[Ch.~9, H.1.g.]{Olkin1979}, we have
\begin{equation}\label{trineq2}
\max_{\boldsymbol{W_q}} \ \text{tr}(\boldsymbol{CW_q}) =  \sum_{i = 1}^{N} \lambda_{C,i} \cdot \lambda_{q,i}
\end{equation}
where the eigenvalues of $\boldsymbol{C}$ and $\boldsymbol{W_q}$ are in the same descending order, $\lambda_{C,1} \geq, \ldots, \geq \lambda_{C,N}$, and $\lambda_{q,1} \geq, \ldots, \geq \lambda_{q,N}$, and therefore the sum of their eigenvalue products yields the maximum value for $\text{tr}(\boldsymbol{CW_q})$ in (\ref{trineq2}). The eigenvectors $\boldsymbol{U_C}$ are obtained based on the order of the corresponding eigenvalues in $\boldsymbol{\Lambda_C} = \text{diag}(\boldsymbol{\lambda_C})$. 

\subsection{Appendix C - Proof For Theorem \ref{theorem4}}
In the eigenvalue decomposition of $\boldsymbol{W_q^\star}$  as $\boldsymbol{W_q} = \boldsymbol{U_q \Lambda_q U_q^\ast}$, the diagonal matrix $\boldsymbol{\Lambda_q} \in \mathbb{R}^{N \times N}$ has power allocated across $K$ diagonal elements and the remaining eigenvalues for the $N-K$ beams is set to zero. Based on Theorem (\ref{theorem5}), equation (\ref{PseqWC}b) can be rewritten as
\begin{equation}
    \text{tr}(\boldsymbol{h}_i^\ast  \boldsymbol{U_q \Lambda_q U_q^\ast} \boldsymbol{h}_i) = \pi_i
\end{equation}
where $\pi_i = \frac{e_i}{\xi_i T_c} \ \forall i = 1...K$. We define the row vector, $\boldsymbol{r}_i^\ast = \boldsymbol{h}_i^\ast  \boldsymbol{U_q} = \boldsymbol{h}_i^\ast  \boldsymbol{U_B}$. Then 
\begin{equation}
    \text{tr}(\boldsymbol{r_i^\ast \Lambda_q r_i}) = \pi_i
\end{equation} 
Define row vector $\boldsymbol{d_i}^\ast = \textbf{diag}(\boldsymbol{r_i r_i^\ast}) \text{ for } i = 1...K$, matrix $\boldsymbol{D} \in \mathbb{R}^{K \times K} = [\boldsymbol{d_1^\ast}...\boldsymbol{d_K^\ast}]$, and vector $\boldsymbol{b} \in \mathbb{R}^{K \times 1} = [\pi_1 ... \pi_K]$. This results in constraint (\ref{P4}c) in ($P_{\text{BP}}$). The ordering of $\boldsymbol{\lambda_q}$ needs to be in the same order as $\boldsymbol{\lambda_C}$, that is, in descending order, so as to maximize (\ref{Lsub2}) as in (\ref{trineq2}) which gives us (\ref{P4}b) in ($P_{\text{BP}}$).

\bibliographystyle{./IEEEtran}
\bibliography{./wptbib}
\end{document}